\newcommand{\blind}{0}
\newcolumntype{.}{D{.}{.}{-1}}
\newtheorem{theorem}{Theorem}
\newtheorem{lemma}{Lemma}
\begin{document}

\def\spacingset#1{\renewcommand{\baselinestretch}%
{#1}\small\normalsize} \spacingset{1}

\if0\blind
{
  \title{\bf Can $GDP$ measurement be further improved?\\Data revision and reconciliation}
  \author{Jan P.A.M. Jacobs\thanks{Preliminary versions of this paper were presented at the 10th International Conference on Computational and Financial Econometrics (CFE 2016), the XIII Conference on Real-Time Data Analysis, Methods and Applications, Banco de Espa\~na, and the ESCoE Conference on Economic Measurement 2018, London. We thank Dean Croushore, Gabriele Fiorentini, Adrian Pagan, Alexander Rathke and Enrique Sentana for helpful comments.}\hspace{.2cm}\\
    \small{University of Groningen, University of Tasmania, CAMA and CIRANO}~
\and Samad Sarferaz 
\\ \small{KOF Swiss Economic Institute, ETH Zurich, Switzerland}
\and Jan-Egbert Sturm \\ \small{KOF Swiss Economic Institute, ETH Zurich, Switzerland and CESifo, Germany} 
\and Simon van Norden \\ \small{HEC Montr\'eal, CIRANO and CIREQ}
    }
\date{August 2018} 
  \maketitle
} \fi

\if1\blind
{
  \bigskip
  \bigskip
  \bigskip
  \begin{center}
    {\LARGE\bf Can $GDP$ measurement be further improved? \\Data revision and reconciliation}
\end{center}
  \medskip
\date{August 2018} 
} \fi

\bigskip
\begin{abstract}
\noindent Recent years have seen many attempts to combine expenditure-side estimates of U.S. real output ($GDE$) growth with income-side estimates ($GDI$) to improve estimates of real GDP growth. We show how to incorporate information from multiple releases of noisy data to provide more precise estimates while avoiding some of the identifying assumptions required in earlier work. This relies on a new insight: using multiple data releases allows us to distinguish news and noise measurement errors in situations where a single vintage does not.

\noindent Our new measure, $GDP^{++}$, fits the data better than $GDP^+$, the $GDP$ growth  measure of Aruoba et al. (2016)\nocite{Aruobaetal2016} published by the Federal Reserve Bank of Philadephia. Historical decompositions show that $GDE$ releases are more informative than $GDI$, while the use of multiple data releases is particularly important in the quarters leading up to the Great Recession. 
\end{abstract}

\noindent%
\\[0.1in]
\noindent \textit{JEL classification:} E01, E32 \\
\noindent \textit{Keywords:} national accounts, output, income, expenditure, news, noise
\vfill

\newpage
\setcounter{page}{1}
\spacingset{1.45} 

\section{Introduction}
Unlike many other nations, U.S. national accounts feature distinct estimates of real output based on the expenditure approach ($GDE$) and the income approach ($GDI$), see Figure \ref{GDP-GDI}. As pointed out by Stone, Champernowne and Meade (1942)\nocite{StoneChampernowneMeade1942}, while in theory these two approaches should give identical estimates, measurement errors cause discrepancies to arise.\footnote{%
The same applies to the production-based estimate of output. See e.g. the study of Rees, Lancaster and Finlay (2015)\nocite{ReesLancasterFinlay2015} on Australian GDP.} 
These discrepancies are sometimes important. Chang and Li (2015)\nocite{ChangLi2015} examine the impact of using $GDI$ rather than $GDE$ in nearly two dozen recent empirical papers published in major economic journals; they find substantive differences in roughly 15\% of them. Nalewaik (2012)\nocite{Nalewaik2012} finds that $GDI$ leads to quicker detection of U.S. recessions than $GDE$.

\begin{figure}[ht] 
\caption{U.S. $GDP$ growth: Expenditure side vs. income side\label{GDP-GDI}}
\centering
    \includegraphics[width=\textwidth]{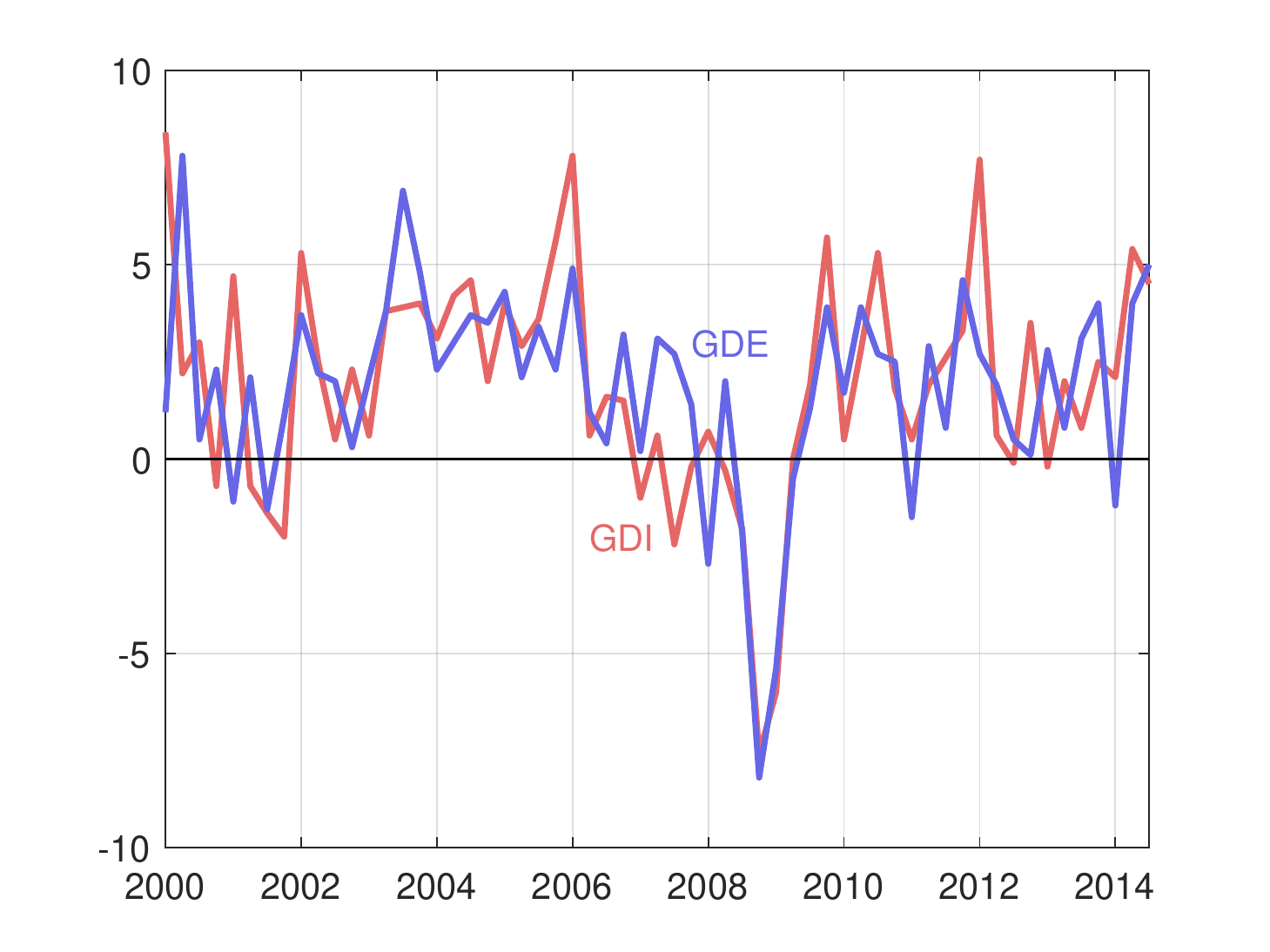}
\end{figure}

\newpage
While several studies have tried to determine which measure should be preferred in various contexts, Weale (1992)\nocite{Weale1992} and Diebold (2010)\nocite{Diebold2010} argue that reconciling them is a more useful response as it should incorporate more information. Fixler and Nalewaik (2009)\nocite{FixlerNalewaik2009} point out, however, that reconciliation traditionally relies on the assumption that measurement errors are ``noise'', which in turn forces the reconciled estimate of the latent variable (``true'' $GDP$ in this case) to be less variable than any of the individual series being reconciled. They instead propose that measurement errors may also include a ``news'' component. While this causes a loss of identification, they glean information from the revision of $GDE$ and $GDI$ to place bounds on relative contributions of news and noise in a least-squares framework. Aruoba et al. (2012)\nocite{Aruobaetal2012} consider the problem from a forecast combination perspective, assuming ``news'' errors and imposing priors in lieu of identification without revisions, while Aruoba et al. (2016)\nocite{Aruobaetal2016} consider alternative identifying assumptions and propose the addition of an instrumental variable. Almuzara et al. (2018)\nocite{AlmuzaraFiorentiniSentana2018} investigate a dynamic factor model with cointegration restrictions. 


\begin{figure}[ht]
\caption{$GDP^+$ in real-time\label{GDPplus_revisions}}
\begin{center}
\includegraphics[width=\textwidth]{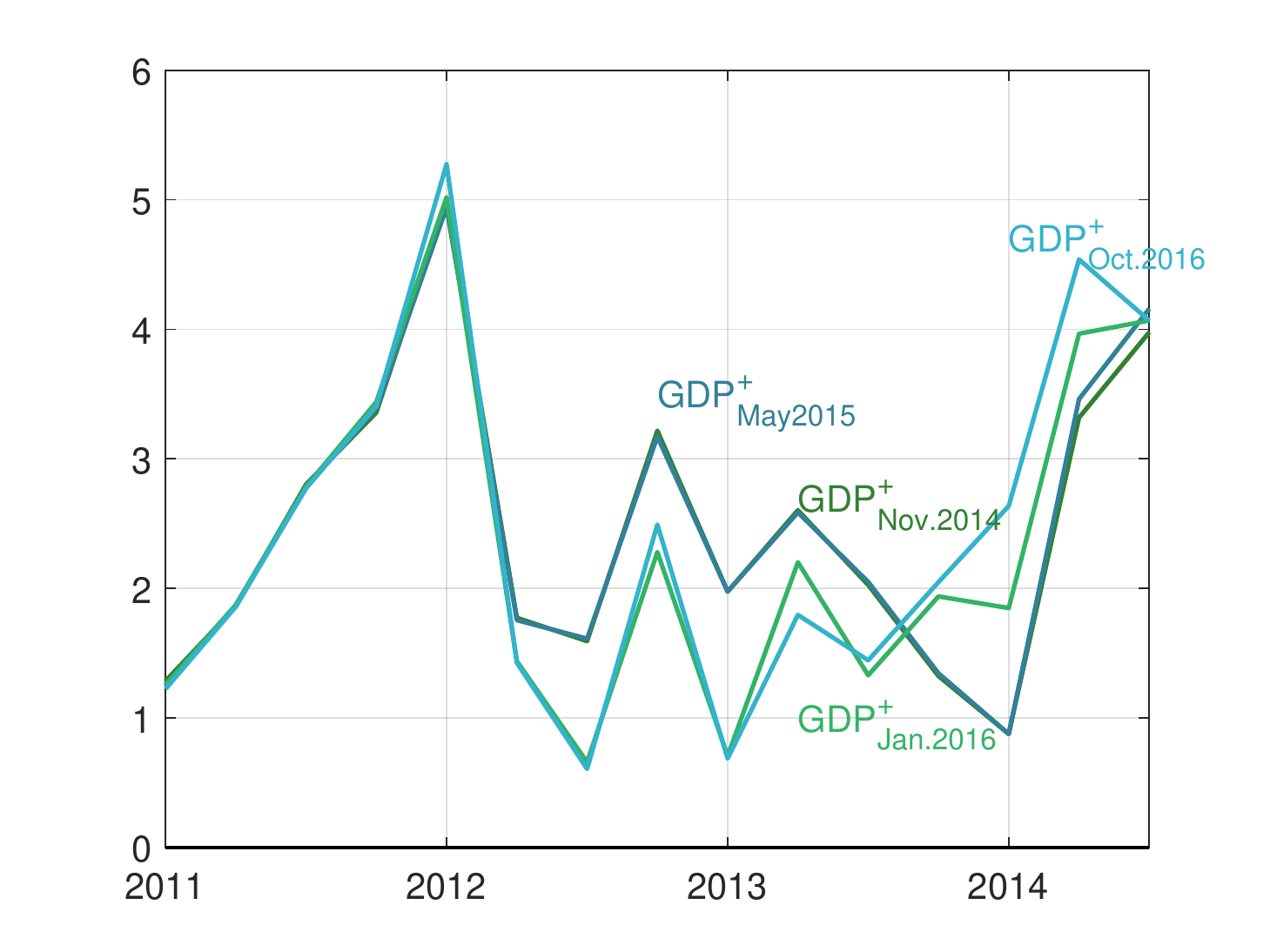}
\end{center}
{\footnotesize Various vintages of $GDP^+$ according to the estimates of the Federal Reserve Bank of Philadelphia.}
\end{figure}

Aruoba et al. (2016)\nocite{Aruobaetal2016} is the basis for the $GDP^+$ measure published by the Federal Reserve Bank of Philadelphia.\footnote{See \texttt{http://www.philadelphiafed.org/research-and-data/real-time-center/gdpplus/}} However, while their approach ignores the possibility of data revision, Figure \ref{GDPplus_revisions} shows that the published series is subject to important revisions, which complicates its interpretation and use in policy decisions.
Separately, Jacobs and van Norden (2011)\nocite{JacobsvanNorden2011} and Kishor and Koenig (2012)\nocite{KishorKoenig2012} propose state-space frameworks that allow estimation of both news- and noise-type measurement errors in data revision, but do not consider problems of data reconciliation. In this paper we extend Jacobs and van Norden (2011\nocite{JacobsvanNorden2011}, henceforth JvN) to consider the problem of reconciliation and identification in which there are multiple estimates of the common underlying variable, all of which are subject to revision. Allowing for both news and noise measurement errors, the result is a modeling framework substantially more general than those previously proposed. We show that identification of these two types of measurement errors is made possible by modeling data revisions as well as the dynamics of the series. We provide a historical decomposition of $GDE$ and $GDI$ into news and noise shocks, and we compare those series to our improved $GDP$ estimate, $GDP^{++}$. We find that $GDP^{++}$ is more persistent than either $GDE$ or $GDI$. While both series appear to contain both news and noise shocks, news shocks have a larger share in $GDE$ than in $GDI$.

The paper is structured as follows. In Section \ref{framework} we present our econometric framework. We show that our system is identified using real-time data and news-noise assumptions. In Section \ref{data_estimation} we describe our data and estimation method. Results are shown in Section \ref{results} and Section \ref{conclusion} concludes. Formal proofs of some results related to identification are presented in an Appendix.

\section{Econometric Framework\label{framework}}


In this section, after establishing some notation, we describe our
econometric framework. We begin by briefly reviewing the univariate news and noise model of
JvN before generalizing it to the problem of data reconciliation. We then
compare the results to the $GDP^{+}$ model of Aruoba et al. (2016) and
discuss their differences for the identification of news and noise
measurement errors. 

We follow the standard notation in this literature by letting $y_{t}^{t+j}$
be an estimate published at time $t+j$ of some real-valued scalar
variable $y$ at time $t$. We define $\vy_{t}$ as a $l\times 1$ vector of $l$
different vintage estimates of $y_{t}^{t+i}$, $i=1,\ldots ,l$ so $\vy%
_{t}\equiv \left[ y_{t}^{t+1},y_{t}^{t+2},\ldots ,y_{t}^{t+l}\right]
^{\prime }$. \ For state-space models, we follow the notation of Durbin and
Koopman (2001)\nocite{DurbinKoopman2001} 
\begin{align}
\vy_{t}& =\mZ\cdot \valpha_{t}+\vvarepsilon_{t}  \label{measurementequation}
\\
\valpha_{t+1}& =\mT\cdot \valpha_{t}+\mR\cdot \veta_{t}
\label{transitionequation}
\end{align}%
where $\vy_{t}$ is $l\times 1$, $\valpha_{t}$ is $m\times 1$, $\vvarepsilon%
_{t}$ is $l\times 1$ and $\veta_{t}$ is $r\times 1$; $\vvarepsilon_{t}\sim
N(0,\mH)$ and $\veta_{t}\sim $ $N(0,\mI_{r})$. Both error terms are i.i.d.\
and orthogonal to one another.\footnote{%
For more detailed assumptions, see Durbin and Koopman (2001\nocite%
{DurbinKoopman2001}, Section 3.1 and 4.1. For convenience we omit constants
from the model in this exposition. }

\subsection{A State-Space model of Measurement Error with News and Noise}

JvN denote the unobserved \textquotedblleft true\textquotedblright\ value of
a variable as $\tilde{y}_{t}$, so that its measurement error $\vu_{t}\equiv %
\vy_{t}-\viota_{l}\cdot \tilde{y}_{t}$, where $\viota_{l}$ is an $l\times 1$
vector of ones. They model these measurement errors as the sum of ``news''
and ``noise'' measurement errors. Measurement errors are said to be noise $%
\left( \zeta _{t}^{t+i}\right) $ when they are orthogonal to the true values 
$\tilde{y}_{t}$, so that 
\begin{equation}
y_{t}^{t+i}=\tilde{y}_{t}+\zeta _{t}^{t+i},\qquad \cov(\tilde{y}_{t},\zeta
_{t}^{t+i})=0.  \label{noisemodel}
\end{equation}%
Noise implies that revisions ($y_{t}^{t+i+1}-y_{t}^{t+i}$) are generally
forecastable. Measurement errors are described as news $(\nu _{t}^{t+i})$ if
and only if  
\begin{equation}
\tilde{y}_{t}=y_{t}^{t+i}+\nu _{t}^{t+i},\qquad \cov(y_{t}^{t+j},\nu
_{t}^{t+i})=0\qquad \forall j\leq i  \label{newsmodel}
\end{equation}%
If data revisions are pure news errors, current and past vintages of the
series will be of no use in forecasting future data revision. 

In their state-space model JvN impose $\vvarepsilon_{t}\equiv \bm{0}_{l\times 1}$
and partition the state vector $\valpha_{t}$ into four components 
\begin{equation}
\valpha_{t}=\left[ \tilde{y}_{t},\vphi_{t}^{\prime },\vnu_{t}^{\prime },%
\vzeta_{t}^{\prime }\right] ^{\prime },  \label{partitioned_alpha}
\end{equation}%
of length $1,$ $b$, $l$ and $l$ respectively, where $\vphi_{t}$ is used to
capture the dynamics of the true values while $\vnu_{t}$ and $\vzeta_{t}$
are the news and noise measurement errors, respectively. They similarly
partition 
\begin{equation}
\mZ=\left[ \mZ_{1},\mZ_{2},\mZ_{3},\mZ_{4}\right]   \label{partitioned_Z}
\end{equation}%
where $\mZ_{1}=\viota_{l}$ (a $l\times 1$ vector of 1's), $\mZ_{2}=\bm{0}%
_{l\times b}$ (an $l\times b$ matrix of zeros), $\mZ_{3}=\mI_{l}$, and $\mZ%
_{4}=\mI_{l}$ (both $l\times l$ identity matrices). Their measurement
equation (\ref{measurementequation}) then simplifies to 
\begin{equation}
\vy_{t}=\mZ\cdot \valpha_{t}=\tilde{y}_{t}+\vnu_{t}+\vzeta_{t}=\mbox{`Truth'}%
+\mbox{`News'}+\mbox{`Noise'}.  \label{eq:JvN_Measurement}
\end{equation}%
They conformably partition the matrix $\mT$ as 
\begin{equation}
\mT=%
\begin{bmatrix}
T_{11} & \mT_{12} & \bm{0} & \bm{0} \\ 
\mT_{21} & \mT_{22} & \bm{0} & \bm{0} \\ 
\bm{0} & \bm{0} & \mT_{3} & \bm{0} \\ 
\bm{0} & \bm{0} & \bm{0} & \mT_{4}%
\end{bmatrix}%
,  \label{partitioned_T}
\end{equation}%
where $T_{11}$ is a scalar, and $\left\{ \mT_{12},\mT_{21},\mT_{22},\mT_{3},%
\mT_{4}\right\} $ are $1\times b$, $b\times 1$, $b\times b$, $l\times l$ and 
$l\times l$; $\bm{0}$ is a conformably defined matrix of zeros. The $\left(
b+1\right) \times \left( b+1\right) $ block in the upper left simply
captures the dynamics of $\widetilde{y}_{t}$ while $\mT_{3}$ and $\mT_{4}$
capture the dynamics of the news and noise shocks. If measurement errors are
independent across time periods (but not vintages), then $\mT_{3}\equiv \mT%
_{4}\equiv \bm{0}_{l\times l}$. \ As we will see below, in the special case where $%
\tilde{y}_{t}$ is assumed to follow an $AR\left( p\right) $ process, this
will impose $p=b+1$, the row vector $%
\begin{bmatrix}
T_{11} & \mT_{12}%
\end{bmatrix}%
$ will contain the autoregressive coefficients and the remainder of the
upper left $\left( b+1\right) \times \left( b+1\right) $ part will be composed of
zeros and ones.\footnote{%
For details, see Jacobs and van Norden (2011)\nocite{JacobsvanNorden2011}.}

The essential difference between news and noise errors is captured in 
the $\left( 1+b+2l\right) \times (1+2l)$ matrix $\mR,$
which is partitioned as follows
\begin{equation}
\mR=%
\begin{bmatrix}
\mR_{1} & \mR_{3} & \bm{0} \\ 
\mR_{2} & \bm{0} & \bm{0} \\ 
\bm{0} & -\mU_{l}\cdot \diag(\mR_{3}) & \bm{0} \\ 
\bm{0} & \bm{0} & \mR_{4}%
\end{bmatrix}%
,  \label{partitioned_R}
\end{equation}%
where $\mU_{l}$ is a $l\times l$ matrix with zeros below the main diagonal
and ones everywhere else, $\mR_{3}=\left[ \sigma _{\nu 1},\sigma _{\nu
2},\ldots ,\sigma _{\nu l}\right] $, where $\sigma _{\nu i}$ is the standard
error of the measurement error associated with $i$-th estimate $y_{t}^{t+i}$%
, $\diag(\mR_{3})$ is a $l\times l$ matrix with elements of $\mR_{3}$ on its
main diagonal, and $\mR_{4}$ is an $l\times l$ matrix. Finally, the error
term is partitioned as $\veta_{t}=\left[ \veta_{et}^{\prime },\ \veta_{\nu
t}^{\prime },\ \veta_{\zeta t}^{\prime }\right] ^{\prime }$, where $\veta%
_{et}$ refers to errors associated with the true values, and $\veta_{\nu t}$
and $\veta_{\zeta t}$ are the errors for news and noise, respectively.

JvN note that (if the model is identified, a question we deal with below)
this framework permits conventional techniques to be used to
estimate the model parameters, allow for missing observations, estimate and
forecast the unobserved true values $\widetilde{y}_{t}$ together with their
confidence intervals, and test hypotheses. 

\subsection{Data Reconciliation}

We now show how the above framework may be adapted to the case where we
have two alternative estimates of the same underlying true value $\widetilde{%
y}_{t}$, both of which are subject to revision. We define $\mY_{t}$ as a $%
2l\times 1$ vector of $l$ different vintage estimates for the $2$ variables $%
y1_{t}^{t+i}$ and $y2_{t}^{t+i}$, $i=1,\ldots ,l$, for a particular
observation $t$, so $\mY_{t}\equiv \left[ y1_{t}^{t+1},y1_{t}^{t+2}\ldots
,y1_{t}^{t+l},y2_{t}^{t+1},y2_{t}^{t+2},\ldots ,y2_{t}^{t+l}\right] ^{\prime
},$ a vector of length $2l.$\ Our state-space model now becomes 
\begin{align}
\mY_{t}& =\mZ\cdot \valpha_{t}  \label{eq:Measurement_new} \\
\valpha_{t+1}& =\mT\cdot \valpha_{t}+\mR\cdot \veta_{t}
\label{eq:Transition_new}
\end{align}%
We again partition the state vector $\valpha_{t}$ into four components 
\begin{equation}
\valpha_{t}=\left[ \tilde{y}_{t}^{\prime },\vphi_{t}^{\prime },\vnu%
_{t}^{\prime },\vzeta_{t}^{\prime }\right] ^{\prime },
\label{eq:alphaPartition}
\end{equation}%
which are now of length $1,$ $b$, $2l$ and $2l$ respectively, and we
similarly partition 
\begin{equation}
\mZ=\left[ \mZ_{1},\mZ_{2},\mZ_{3},\mZ_{4}\right]   \label{eq:Zpartition}
\end{equation}%
where $\mZ_{1}=\viota_{2l}$(a $2l$ vector of ones), $\mZ_{2}=\bm{0}%
_{2l\times b}$ (a $2l\times b$ matrix of zeros), and $\mZ_{3}=$ $\mZ_{4}=\mI%
_{2l}$ (both are $2l\times 2l$ identity matrices). The measurement equation (%
\ref{eq:Measurement_new}) therefore again simplifies to 
\[
\mY_{t}=\mZ\cdot \valpha_{t}=\tilde{y}_{t}+\vnu_{t}+\vzeta_{t}=\mbox{`Truth'}%
+\mbox{`News'}+\mbox{`Noise'}.
\]%
The matrix $\mT$ is partitioned much as before%
\begin{equation}
\mT=%
\begin{bmatrix}
T_{11} & \mT_{12} & \bm{0} & \bm{0} \\ 
\mT_{21} & \mT_{22} & \bm{0} & \bm{0} \\ 
\bm{0} & \bm{0} & \mT_{3} & \bm{0} \\ 
\bm{0} & \bm{0} & \bm{0} & \mT_{4}%
\end{bmatrix}%
,  \label{eq:Tpartition}
\end{equation}%
The upper left block (consisting of $T_{11},\mT_{12},\mT_{21}$ and $\mT_{22}$%
) is precisely the same as in (\ref{partitioned_T}) above; this is because
it solely determines the dynamics of $\widetilde{y}_{t}$, which are
unchanged. However, the addition of a new series increases the dimension of $%
\mT_{3}$ and $\mT_{4}$ from $l\times l$ to $2l\times 2l$. 

$\ \mR$ is now a $%
\left( 1+b+4l\right) \times (1+4l)$ matrix where we separate the news and
noise measurement errors for the two variables 
\begin{equation}
\mR=%
\begin{bmatrix}
\mR_{1} & \mR_{3} & \mR_{4} & \bm{0} & \bm{0} \\ 
\mR_{2} & \bm{0} & \bm{0} & \bm{0} & \bm{0} \\ 
\bm{0} & -\bm{U}_{l}\cdot \diag(\mR_{3}) & \bm{0} & \bm{0} & \bm{0} \\ 
\bm{0} & \bm{0} & -\bm{U}_{l}\cdot \diag(\mR_{4}) & \bm{0} & \bm{0} \\ 
\bm{0} & \bm{0} & \bm{0} & \mR_{5} & \bm{0} \\ 
\bm{0} & \bm{0} & \bm{0} & \bm{0} & \mR_{6}%
\end{bmatrix}%
,  \label{eq:Rpartitioned}
\end{equation}%
where the row vector $\mR_{3}=\left[ \sigma _{\nu _{1}^{1}},\sigma _{\nu
_{2}^{1}},\ldots ,\sigma _{\nu _{1}^{1}}\right] $ corresponds to the news in 
$y1$ while $\mR_{4}=\left[ \sigma _{\nu _{1}^{2}},\sigma _{\nu
_{2}^{2}},\ldots ,\sigma _{\nu _{l}^{2}}\right] $ corresponds to the news in 
$y2$. $\diag(\mR_{3})$ and $\diag(\mR_{4})$ are $l\times l$ diagonal
matrices with the elements of $\mR_{3}$ and $\mR_{4}$ on their main
diagonals, while $\mR_{5}$ and $\mR_{6}$ are $l\times l$ diagonal matrices.

Finally, we partition $\veta_{t}=\left[ \veta_{et}^{\prime },\ \veta_{\nu
_{1}t}^{\prime },\ \veta_{\nu _{2}t}^{\prime },\ \veta_{\zeta _{1}t}^{\prime
},\ \veta_{\zeta _{2}t}^{\prime }\right] ^{\prime }$, where $\veta_{et}$
refers to errors associated with the true values, and $\veta_{\nu it}$ and $%
\veta_{\zeta it}$ are the errors for news and noise measurement errors in
variable $i.$

\medskip
To illustrate, consider the following very simple case. Let $y1\equiv GDE$
(the growth rate of real gross domestic expenditure), $y2\equiv GDI$ (the
growth rate of real gross domestic income), $l=2$ (we only consider two
vintages, the 1st and 2nd releases) and we'll assume that the growth rate of
``true" real output $\widetilde{y}$ follows an $AR\left( 1\right) $. Then (%
\ref{eq:Measurement_new}) becomes 
\begin{eqnarray*}
\begin{bmatrix}
GDE^{1st}_{t} \\ 
GDE^{2nd}_{t} \\ 
GDI^{1st}_{t} \\ 
GDI^{2nd}_{t}%
\end{bmatrix}%
 &=&%
\begin{bmatrix}
1 & 0 & 1 & 0 & 0 & 0 & 1 & 0 & 0 & 0 \\ 
1 & 0 & 0 & 1 & 0 & 0 & 0 & 1 & 0 & 0 \\ 
1 & 0 & 0 & 0 & 1 & 0 & 0 & 0 & 1 & 0 \\ 
1 & 0 & 0 & 0 & 0 & 1 & 0 & 0 & 0 & 1%
\end{bmatrix}%
\cdot 
\begin{bmatrix}
\tilde{y}_{t} \\ 
\tilde{y}_{t-1} \\ 
\vnu_{t} \\ 
\vzeta_{t}%
\end{bmatrix}
\\
&=&%
\begin{bmatrix}
\tilde{y}_{t} \\ 
\tilde{y}_{t} \\ 
\tilde{y}_{t} \\ 
\tilde{y}_{t}%
\end{bmatrix}%
+%
\begin{bmatrix}
\nu _{t}^{GDE,1} & 0 & 0 & 0 \\ 
0 & \nu _{t}^{GDE,2} & 0 & 0 \\ 
0 & 0 & \nu _{t}^{GDI,1} & 0 \\ 
0 & 0 & 0 & \nu _{t}^{GDI,2}%
\end{bmatrix}%
+%
\begin{bmatrix}
\zeta _{t}^{GDE,1} & 0 & 0 & 0 \\ 
0 & \zeta _{t}^{GDE,2} & 0 & 0 \\ 
0 & 0 & \zeta _{t}^{GDI,1} & 0 \\ 
0 & 0 & 0 & \zeta _{t}^{GDI,2}%
\end{bmatrix}
\\
&=&\mbox{`Truth'}+\mbox{`News'}+\mbox{`Noise'}.
\end{eqnarray*}
and (\ref{eq:Transition_new}) becomes%
\[
\begin{bmatrix}
\tilde{y}_{t+1} \\ 
\tilde{y}_{t} \\ 
\vnu_{t+1} \\ 
\vzeta_{t+1}%
\end{bmatrix}%
=%
\begin{bmatrix}
\rho  & 0 & \bm{0} & \bm{0} \\ 
1 & 0 & \bm{0} & \bm{0} \\ 
\bm{0} & \bm{0} & \mT_{3} & \bm{0} \\ 
\bm{0} & \bm{0} & \bm{0} & \mT_{4}%
\end{bmatrix}%
\cdot 
\begin{bmatrix}
\tilde{y}_{t} \\ 
\tilde{y}_{t-1} \\ 
\vnu_{t} \\ 
\vzeta_{t}%
\end{bmatrix}%
+\mR\cdot \veta_{t},
\]%
where 
\begin{eqnarray*}
\mR &=&%
\begin{bmatrix}
\sigma _{e} & \sigma _{\nu }^{GDE1} & \sigma _{\nu }^{GDE2} & \sigma _{\nu
}^{GDI1} & \sigma _{\nu }^{GDI2} & 0 & 0 & 0 & 0 \\ 
0 & 0 & 0 & 0 & 0 & 0 & 0 & 0 & 0 \\ 
0 & -\sigma _{\nu }^{GDE1} & -\sigma _{\nu }^{GDE2} & 0 & 0 & 0 & 0 & 0 & 0
\\ 
0 & 0 & -\sigma _{\nu }^{GDE2} & 0 & 0 & 0 & 0 & 0 & 0 \\ 
0 & 0 & 0 & -\sigma _{\nu }^{GDI1} & -\sigma _{\nu }^{GDI2} & 0 & 0 & 0 & 0
\\ 
0 & 0 & 0 & 0 & -\sigma _{\nu }^{GDI2} & 0 & 0 & 0 & 0 \\ 
0 & 0 & 0 & 0 & 0 & \sigma _{\zeta }^{GDE1} & 0 & 0 & 0 \\ 
0 & 0 & 0 & 0 & 0 & 0 & \sigma _{\zeta }^{GDE2} & 0 & 0 \\ 
0 & 0 & 0 & 0 & 0 & 0 & 0 & \sigma _{\zeta }^{GDI1} & 0 \\ 
0 & 0 & 0 & 0 & 0 & 0 & 0 & 0 & \sigma _{\zeta }^{GDI2}%
\end{bmatrix}
\\
\veta_{t} &=&\left[ e_{t},\nu _{t}^{GDE1},\nu _{t}^{GDE2},\nu
_{t}^{GDI1},\nu _{t}^{GDI2},\zeta _{t}^{GDE1},\zeta _{t}^{GDE2},\zeta
_{t}^{GDI1},\zeta _{t}^{GDI2}\right] ^{\prime }
\end{eqnarray*}

\subsection{Identification and GDP$^{+}$}

Aruoba et al. (2016)\nocite{Aruobaetal2016} consider the problem of
identification in a special case of the GDE/GDI example considered above
where only a single vintage is available $\left( l=1\right) $. Their
unrestricted model may be written as\footnote{%
See Aruoba et al. (2016)\nocite{Aruobaetal2016}, equations (A.1) and (A.2).
Their model further differs from the model above in that (a) they model only the
sum of news and noise shocks, and (b) they assume that $\mT_{3}=\mT_{4}=0$, a condition that we will also impose, below.} 
\begin{eqnarray}
\begin{bmatrix}
GDE_{t} \\ 
GDI_{t}%
\end{bmatrix}
&=&%
\begin{bmatrix}
1 & 0 & 1 & 0 \\ 
1 & 0 & 0 & 1%
\end{bmatrix}%
\cdot 
\begin{bmatrix}
\tilde{y}_{t} \\ 
\tilde{y}_{t-1} \\ 
\eta _{t}^{E} \\ 
\eta _{t}^{I}%
\end{bmatrix}
\\
\begin{bmatrix}
\tilde{y}_{t+1} \\ 
\tilde{y}_{t} \\ 
\eta _{t+1}^{E} \\ 
\eta _{t+1}^{I}%
\end{bmatrix}
&=&%
\begin{bmatrix}
\rho  & 0 & 0 & 0 \\ 
1 & 0 & 0 & 0 \\ 
0 & 0 & 0 & 0 \\ 
0 & 0 & 0 & 0%
\end{bmatrix}%
\cdot 
\begin{bmatrix}
\tilde{y}_{t} \\ 
\tilde{y}_{t-1} \\ 
\eta _{t}^{E} \\ 
\eta _{t}^{I}%
\end{bmatrix}
 +%
\begin{bmatrix}
\sigma _{yy} & \sigma _{yE} & \sigma _{yI} \\ 
0 & 0 & 0 \\ 
\sigma _{Ey} & \sigma _{EE} & \sigma _{EI} \\ 
\sigma _{Iy} & \sigma _{IE} & \sigma _{II}%
\end{bmatrix}%
\cdot 
\begin{bmatrix}
e_{t}^{y} \\ 
e_{t}^{E} \\ 
e_{t}^{I}%
\end{bmatrix}%
\end{eqnarray}%
and they show that it is not identified. They propose adding a third
(instrumental) variable which is correlated with $\widetilde{y}_{t}$ but not
with $\eta _{t}^{E}$ or $\eta _{t}^{I}$, suggesting that household survey
data may be suitable for this purpose. We argue that the model may be
identified instead by increasing the number of vintages analysed and
assuming that measurement errors are the sum of news and noise measurement
errors as characterized above. We explore this point in the remainder of this
section by comparing the available number of sample moments to the number of
free parameters in the model. In the Appendix we provide a more rigorous
proof of identification in a slightly simpler model using the methods of
Komunjer and Ng (2011).

The essential insight comes from the form of the $\mR$ matrix in (\ref%
{eq:Rpartitioned}). News and noise measurement errors have tightly
constrained behaviour across successive data vintages; Noise errors are
assumed to be uncorrelated across vintages and with innovations in true
values, while news errors must be correlated with one another, with
innovations in true values, and their variances must be decreasing as series
are revised. 

If we have two series to reconcile (here $GDE$ and $GDI$) and $l$ vintages of
each, we have $2\cdot l\cdot (2\cdot l+1)/2$ observable cross moments as
well as $2\cdot l$ first-order autocorrelation coefficients, for a total of $%
l\cdot (2\cdot l+3)$ moments. The only free parameters in the above model,
however, are the autocorrelation coefficient $\rho $ and the $(1+4\cdot l)$
non-zero elements of $R$, for a total of $2\cdot (1+2\cdot l).$ This implies
that the number of available moments increases with $l^{2}$ while the number
of free parameters increases only with $l$.\footnote{%
Note that we have ignored any free parameters in $\mT_{3}$ and $\mT_{4}$ in
these calculations. We return to this, below. One must also keep in mind
that identification by data revision requires that the data are in fact
revised. If not, we effectively return to the underidentified case of $l=1$. 
}

In the special case where we use only a single data release, $l=1$, we
have $2\cdot (1+2\cdot 1)=6$ free parameters to estimate, but only $1\cdot
(2\cdot 1+3)=5$ available moments with which to do so. This is consistent
with the lack of identification noted by Aruoba et al. (2016)\nocite%
{Aruobaetal2016}. However, if we use $l=2$ data vintages, we have $2\cdot
(1+2\cdot 2)=10$ free parameters and $2\cdot (2\cdot 2+3)=14$ moments with which to
identify them. For $l=3$ we have 27 moments with which to estimate 14
parameters and for $l=4$ (the case we consider below) we have 44 moments
with which to estimate 18 parameters.

This suggests that as we add more data releases, we potentially have the
ability to generalize the model further still. The univariate data revision
model of JvN envisages two such types of generalization.

\begin{enumerate}
\item We may wish to relax some of the zero restrictions on $\mR$. In particular, it may be desirable to allow for news shocks to
be correlated across the two variables, or to allow for noise shocks to be
correlated across data releases. 

\item We may wish to relax some of the zero restrictions on the transition
matrix in (\ref{eq:Transition_new}) to allow for measurement errors to be
correlated across calendar periods. (JvN refer to these as \textquotedblleft
spillover\textquotedblright\ effects.)
\end{enumerate}

In the Appendix, we briefly explore the possibilities for identification
with some of these generalizations. We now turn to consider the revisions in
the available data.

\section{Data and Estimation\label{data_estimation}}
\subsection{Data}
We use monthly vintages of quarterly expenditure-based and income-based estimates of GDP from the Bureau of Economic Analysis (BEA) covering the period 2003Q1--2014Q3. 
For $GDE$ we employ the Advance, the Third, the 12th and the 24th releases and Second/Third, 12th and the 24th releases for $GDI$. Due to a lag in source data availability the BEA does not prepare Advance estimates for $GDI$. The initial estimates for $GDI$ are presented with the Second $GDI$ estimate. Estimates for fourth quarter $GDI$ are presented in the Third estimate only.\footnote{See Fixler et al. (2014)\nocite{FixlerGreenawayGrimm2014} for a more detailed discussion of the $GDE$-$GDI$ vintage history.} 


\subsection{Estimation}
We employ Gibbs Sampling methods to obtain posterior simulations for our model's parameters (see, e.g.,  Kim and Nelson 1999\nocite{KimNelson1999}). We use conjugate and diffuse priors for the coefficients and the variance covariance matrix, resulting in a multivariate normal posterior for the coefficients and an inverted Wishart posterior for the variance covariance matrix. For the prior for the coefficients restricted to zero we assume the mean to be zero and variance to be close to zero. 

Our Gibbs sampler has the following structure. We first initialize the sampler with values for the coefficients and the variance covariance matrix. Conditional on data, the most recent draw for the coefficients and for the variance covariance matrix, we draw the latent state variables $\valpha_t$ for $t=1,...,T$ using the procedure described in Carter and Kohn (1994)\nocite{CarterKohn1994}. In the next step, we condition on data, the most recent draw for the latent variable $\valpha_t$ and for the variance covariance matrix, drawing the coefficients from a multivariate normal distribution. Finally, conditional on data, the most recent draw for the latent variables and the coefficients, we draw the variance covariance matrix from an inverted Wishart distribution. We cycle through 100K Gibbs iterations, discarding the first 90K as burn-in. Of those 10K draws we save only every 10th draw, which gives us in total 1000 draws on which we base our inference. Convergence of the sampler was checked by studying recursive mean plots and by varying the starting values of the sampler and comparing results.

\section{Results\label{results}}
Here we compare our measure of $GDP$ to releases of $GDE$ and $GDI$ in four different ways: (i) in graphs, (ii) looking at historical decompositions, (iii) by investigating dynamics, and (iv) by calculating relative contributions. To distinguish between the true unknown values of $GDP$ and our model's estimates of these values, we refer to our model's estimates as $GDP^{++}$.

\begin{figure}[hp]
    \caption{$GDP^{++}$ vs. $GDE$\label{GDP++_GDPE}}
    \begin{center}
    \includegraphics[width=\textwidth]{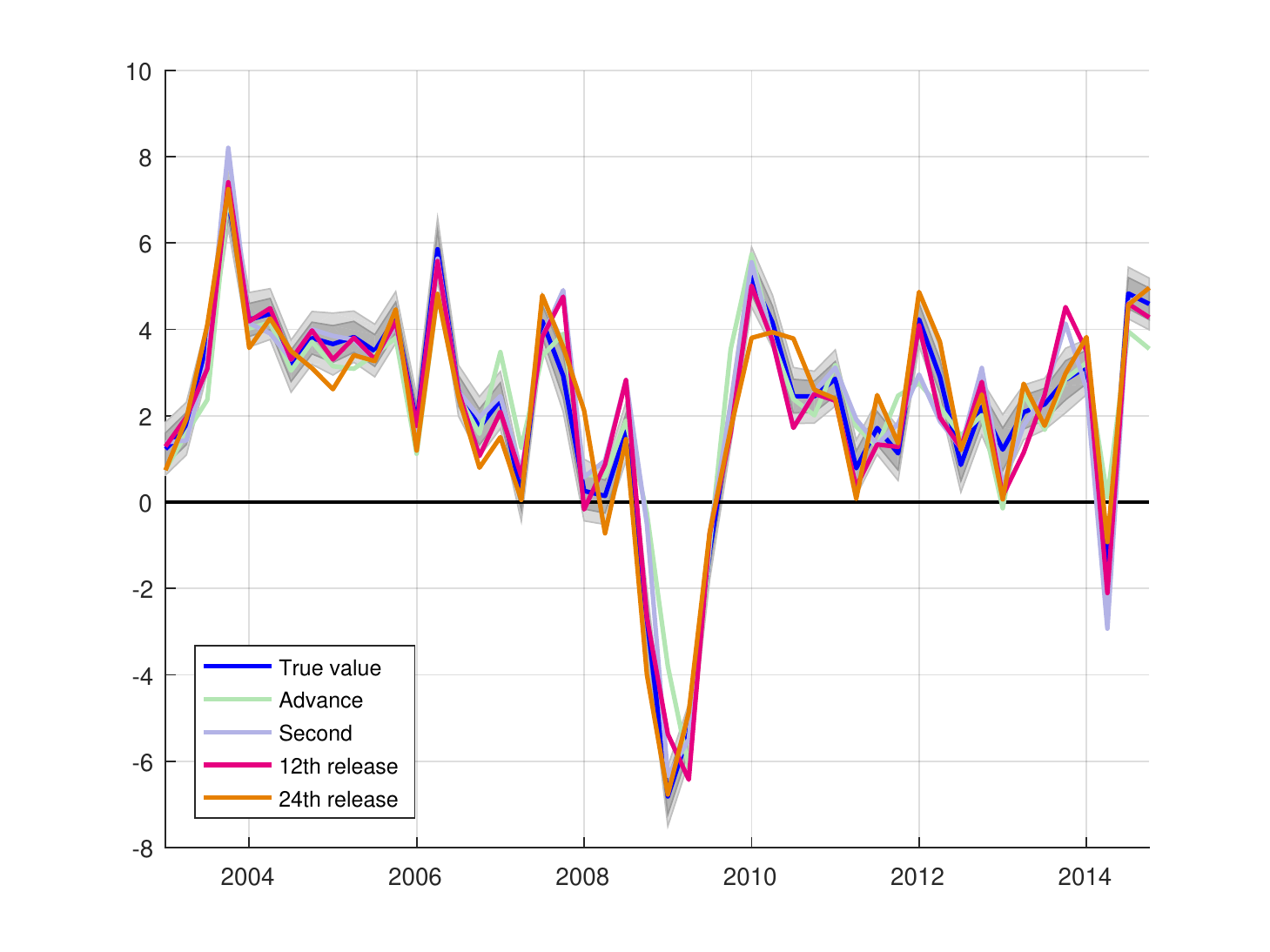}
    \end{center}
{\footnotesize The blue line represents the posterior mean of $GDP$ (the ``true'' value) and the shaded area around the blue line indicates 90\% of posterior probability mass. The green line represents the advance estimate, the purple line is the second estimate, the red line the 12th release and the orange line the 24th release of expenditure side $GDP$ growth.}
\end{figure}
 
\subsection{Comparison of $GDP^{++}$ and releases of $GDE$ and $GDI$}
In Figure~\ref{GDP++_GDPE} we compare $GDP^{++}$ and its shaded posterior ranges (90\% of probability mass) to the four releases of $GDE$ we employed in the estimation, the Advance, third, the 12th and the 24th release. There is some evidence that the releases are more volatile than the true values of $GDP$. We observe that the releases are outside the posterior bounds for some periods. This observation holds especially for the Advance release and the 24th release; in some periods, like e.g. 2010Q1, the Advance release and the 24th release are on different sides of the posterior range.  

\begin{figure}[ht]
\caption{$GDP^{++}$ vs. $GDI$\label{GDP++_GDPI}}
\begin{center}
    \includegraphics[width=\textwidth]{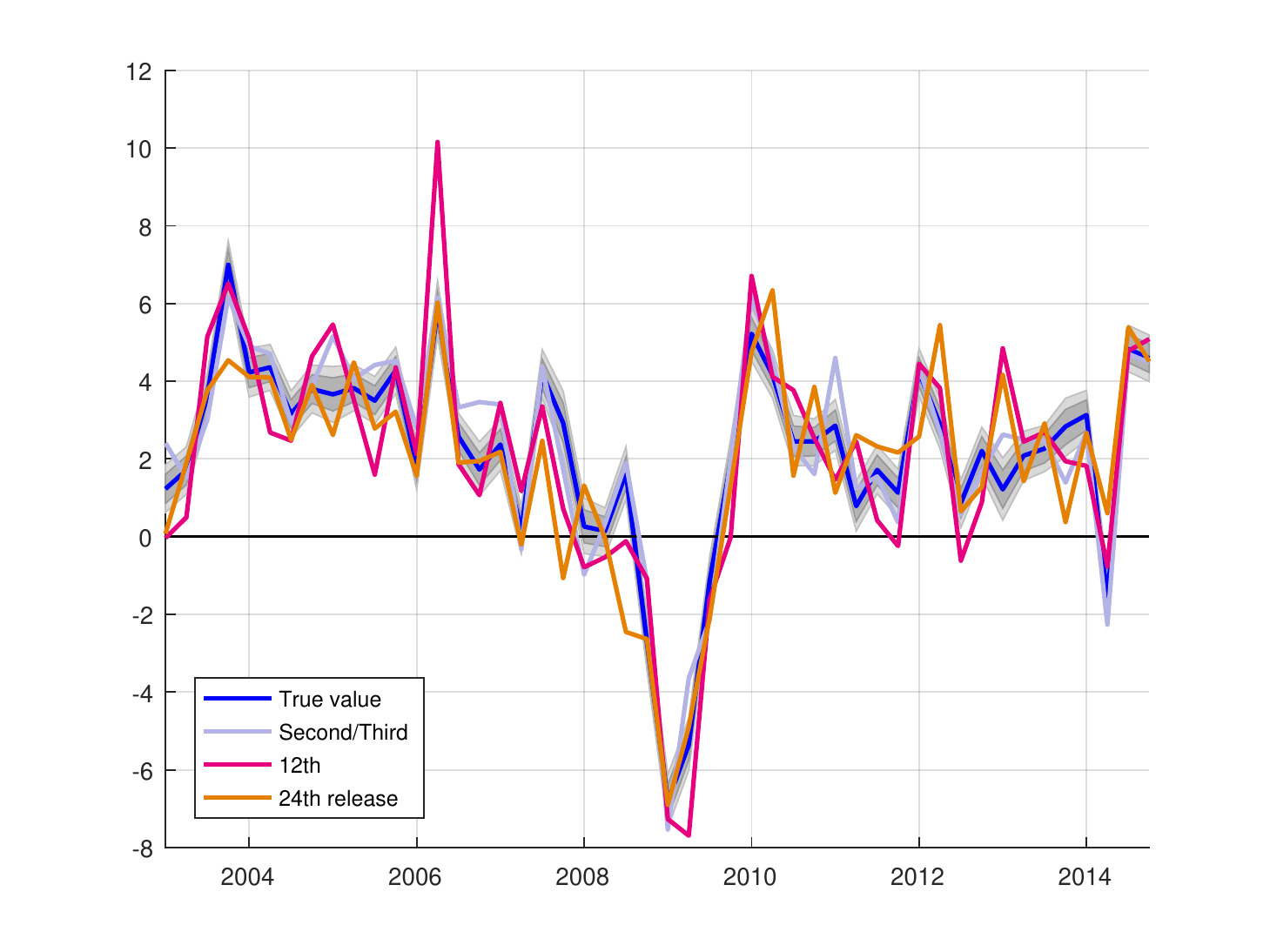}
\end{center}    
{\footnotesize The blue line represents the posterior mean of $GDP$, the ``true'' value, and the shaded area around the blue line indicates 90\% of posterior probability mass. The purple line is the second/third estimate, the red line the 12th release and the orange line the 24th release of income side $GDP$ growth.}
\end{figure}

Figure~\ref{GDP++_GDPI} shows $GDP^{++}$ together with shaded posterior ranges (90\% of probability mass) and the three releases of $GDI$ we employed in the estimation, the Second/Third, the 12th and the 24th release. The releases fluctuate around the posterior bounds of the true values. The $GDI$ releases are more volatile than our estimates $GDP^{++}$. The releases of $GDI$ are also much more volatile than the releases of $GDE$. 
Note that the sample paths of $GDP_M$ and $GDE$ and $GDI$ in Aruoba et al. (2016, Figure 3) show a different picture than our Figures \ref{GDP++_GDPE} and \ref{GDP++_GDPI}. $GDE$ differs more from their $GDP$ measure than $GDI$.  

\subsection{Historical decomposition\label{historical_decomposition}}
Our econometric framework (\ref{eq:Measurement_new}-\ref{eq:Transition_new}) allows the historical decomposition of $GDE$ and $GDI$ in terms of news and noise measurement errors. We illustrate the decomposition for $GDE$. 

Suppose, we have $l$ releases of $GDE_t$
\begin{align}
GDE_t^1 &= \rho GDP_{t-1}^{++} + \eta_{Gt}+ \eta_{E \zeta t}^1 \nonumber\\
GDE_t^2 &= \rho GDP_{t-1}^{++} + \eta_{Gt}+  \eta_{E\nu t}^1 + \eta_{E \zeta t}^2 \nonumber\\
\vdots \ \     &= \ \ \ \ \ \ \ \ \ \  \vdots \nonumber\\
GDE_t^l &= \rho GDP_{t-1}^{++} +  \eta_{Gt} + \eta_{E \nu t}^1 + ... + \eta_{E\nu t}^{l-1} + \eta_{E \zeta t}^l. \nonumber
\end{align}
Then the total revision of $GDE$ can be written as 
\begin{align}
GDE_t^l - GDE_t^1&= \underbrace{\eta_{E \nu t}^1 + ... + \eta_{E\nu t}^{l-1}}_\text{News} + \underbrace{\eta_{E \zeta t}^l - \eta_{E \zeta t}^1}_\text{Noise} \label{histdecomp}
\end{align}
where every element on the right-hand side of the equation is part of the state vector whose estimates may be recovered using standard techniques.

\begin{figure}[ht]
\caption{Historical Decomposition\label{fig:historical_decompositions}}
\begin{center}
$GDE$ \\
\includegraphics[width=.75\textwidth]{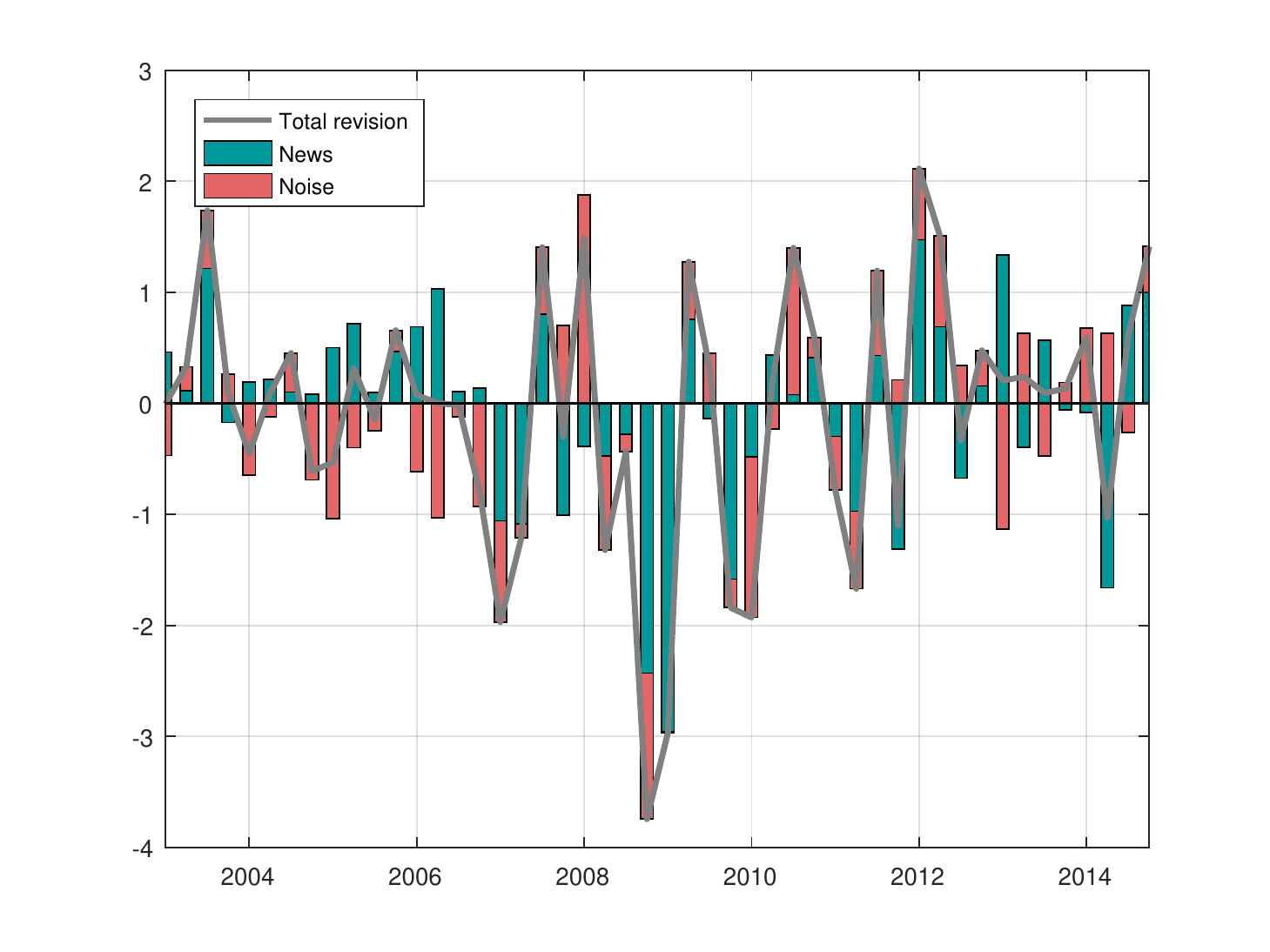} \\
$GDI$ \\
\includegraphics[width=.75\textwidth]{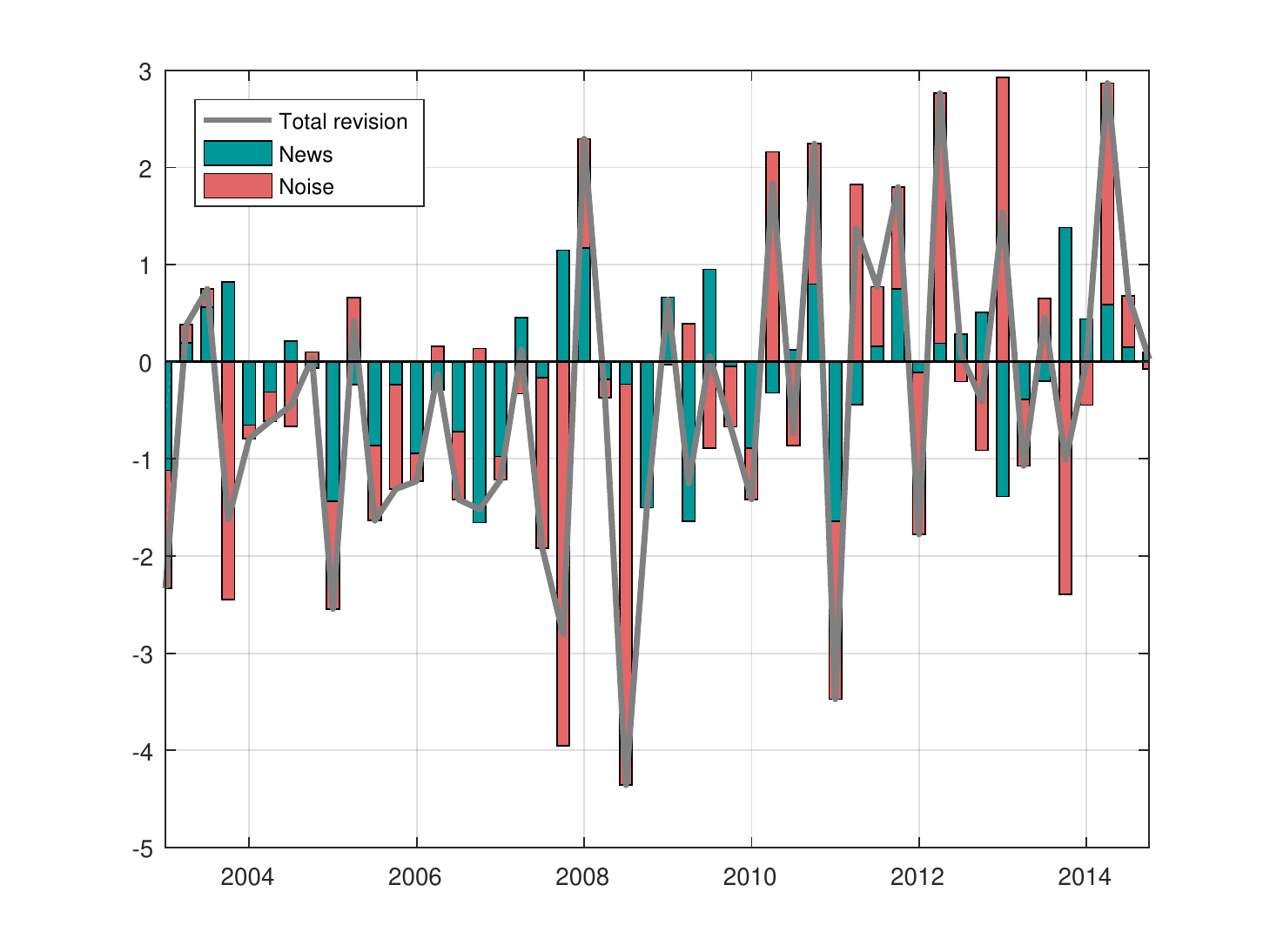} 
\end{center}
{\footnotesize Historical decomposition of the total revision (24th release minus second estimate) into news and noise. The red bars depict the share of news and the green bars the share of noise in total revision (grey line). The historical decomposition is based on the decomposition described in (\ref{histdecomp}).}
\end{figure}
\clearpage

The outcomes of the historical decompositions are shown in Figure \ref{fig:historical_decompositions}. 
The top panel shows total revisions in $GDE$ with news and noise shares, the bottom panel total $GDE$ revisions with news and noise shares. 
We observe that total revisions in $GDI$, the bottom panel, are larger than total revisions in $GDE$, a stylized fact which can also be distilled from the previous two figures. The two panels suggest that the news share in total $GDE$ revisions is larger than the noise share while the opposite seems to hold for total revisions in $GDI$. This observation is consistent with Fixler and Nailewaik (2009), who also reject the pure noise assumption in $GDI$. It also appears that GDI was particularly noisy around the start of 2008 and after 2012.

\subsection{Dynamics of $GDP^{++}$ and other GDP measures}

In Figure \ref{fig:gdp_dynamics} we depict the ($\rho,\sigma^2$) pairs summarizing the dynamics of our true $GDP$ estimate across all draws. We contrast the ($\rho,\sigma^2$) pairs corresponding to our $GDP^{++}$ estimate to the ($\rho,\sigma^2$) pairs obtained when using a news measurement error only or a noise measurement error only version of our model, the benchmark model estimated in Aruoba et al. (2016) and when fitting an AR(1) model to $GDE$ and $GDI$.

\begin{figure}[ht]
\caption{GDP Dynamics}\label{fig:gdp_dynamics}
\begin{center}
\includegraphics[width=.8\linewidth]{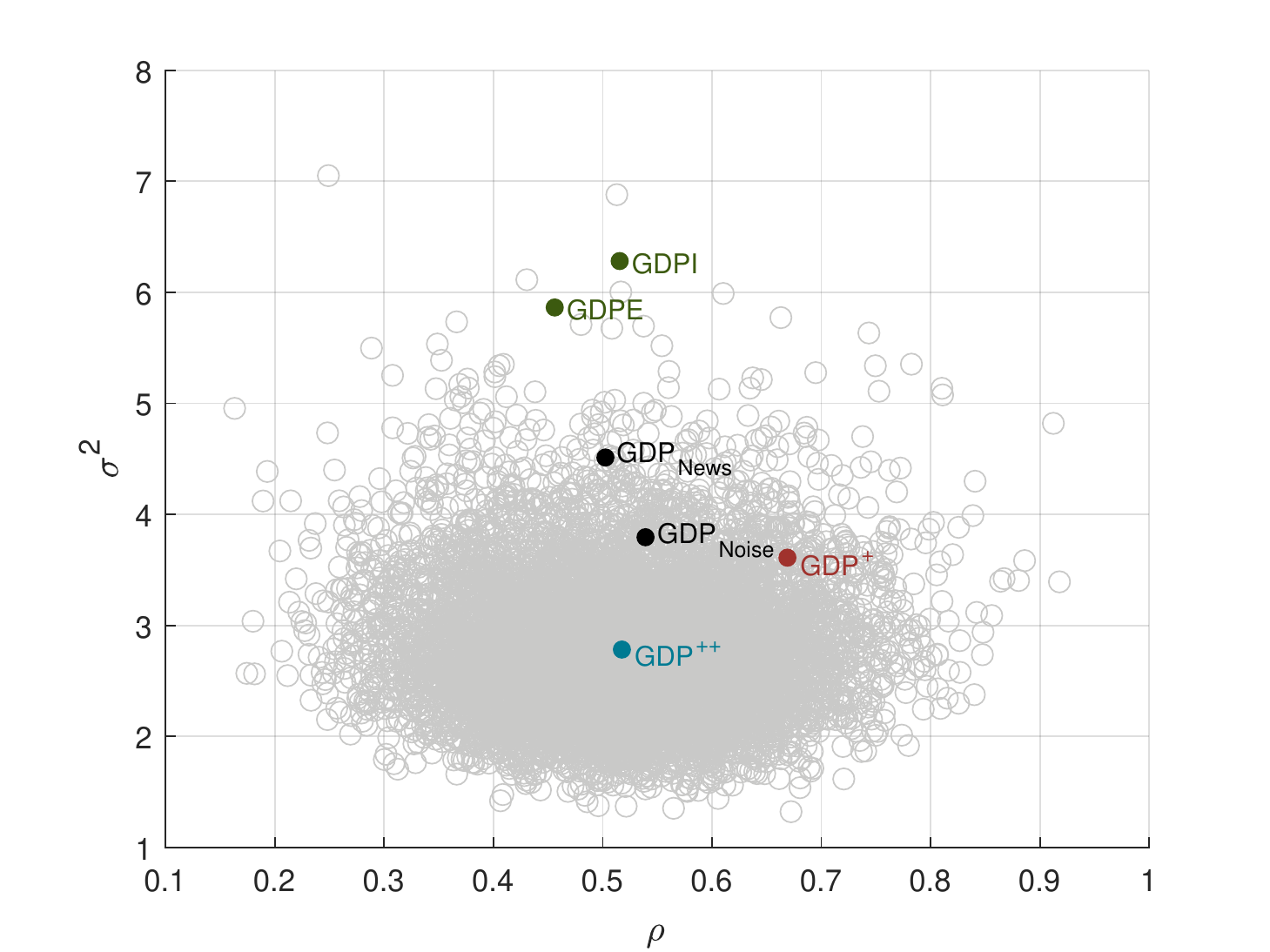}
\end{center}
{\footnotesize The grey shaded area consists of ($\rho, \sigma^2$) pairs across draws from our sampler and the blue  dot is the posterior mean of the ($\rho,\sigma^2$) pairs across draws. The black dots represent the posterior mean of the ($\rho,\sigma^2$) pairs of the news only and noise model, respectively. The red dot is the posterior mean of the ($\rho,\sigma^2$) pairs of $GDP^+$ using the benchmark specification ($\zeta=0.8$) described in Aruoba et al. (2016). The green dots are ($\rho,\sigma^2$) pairs, resulting from AR(1) models fitted to expenditure side and income side GDP growth, respectively. The sampling period for re-estimating the Aruoba et al. (2016) model and for fitting the AR(1) models to the two GDP measures is 2003Q3--2014Q3 (released on October 28, 2016).}
\end{figure}
~

Figure \ref{fig:gdp_dynamics} reveals that our real-time data based estimate of $GDP$ is somewhat less persistent than the $GDP^+$ measure of Aruoba et al. (2016), but exhibits a higher persistence than the estimates for $GDE$ and $GDI$.\footnote{We thank Dongho Song for making his Matlab code available online.} We also find that the posterior mean of the innovation variance of our $GDP^{++}$ is much smaller than the innovation variances of $GDE$, $GDI$ and the benchmark model of Aruoba et al. (2016). The innovation variance of $GDP^{++}$ is also smaller than the innovation variance of the models estimated with news and noise measurement errors only, which in turn are higher than the innovation variance of $GDP^+$. The combination of a $\rho$ that is close to those implied by the various models estimated in Aruoba et al. (2016) and a $\sigma^2$ that is much smaller than the ones implied by Aruoba et al. (2016) leads to a higher forecastability of the $GDP^{++}$ measure.  
 
\subsection{Relative contributions of $GDE$ and $GDI$ to $GDP^{++}$}

To assess the relative importance of $GDI$ and $GDE$ at different releases,  we use the Kalman gains. 
They represent the weight that the estimated value places on estimates of various releases.
The outcomes are listed in Table \ref{tab:KalmanGain}. 

\begin{table}[htb]
  \centering
  \caption{Kalman Gains
  \label{tab:KalmanGain}}
    \begin{tabular}{lrr}
    ~\\
    \toprule
    Weight on &  $GDE$ & $GDI$   \\
    \midrule
    \textbf{News and Noise} &       &  \\
    Advance    & 0.0474 &  \\
    Second/Third    & 0.0402 & 0.2624 \\
    12th   & 0.2879 & 0.0473 \\
    24th Release   & 0.2824 & 0.079 \\
    \midrule
     \textbf{News Only }&       &  \\
    Advance          & 0.0073 &  \\
    Second/Third     & 0.0000 & 0.0073 \\
    12th             & 0.0000 & 0.0000 \\
    24th Release     & 0.9786 & 0.0142 \\
    \midrule
    \textbf{Noise Only} &       &  \\
    Advance          & 0.1421 &  \\
    Second/Third     & 0.2823 & 0.0688 \\
    12th             & 0.3309 & 0.0303 \\
    24th Release     & 0.0997 & 0.0283 \\
    \bottomrule
    \end{tabular}%
  \label{tab:KalmanGains}%
\end{table}%


 
 
The results show that the weights assigned to different releases vary greatly as we change the assumed structure of the measurement error. When they are assumed to be pure News, the second panel of the table shows that 98\% of the weight is put on the last release of $GDE$. Once we allow for the possibility of noise errors, however, more weight is assigned to $GDI$ and weights are spread over more releases. The earliest releases of $GDE$ receive less weight than the later releases, while the opposite is true for $GDI$.  In all cases, we also find that $GDE$ releases are more important for explaining $GDP$ than $GDI$ releases, in contrast to Aruoba et al. (2016).

\section{Conclusion\label{conclusion}}

We have described a new approach to data reconciliation that exploits multiple data releases on each series. This helps both with the identification of measurement errors and with optimally extracting information from multiple noisy series. We used this to propose
a new measure of U.S. $GDP$ growth using real-time data on $GDE$ and $GDI$.  Our measure $GDP^{++}$ is shown to be more persistent than $GDE$ and $GDI$ and has  smaller residual variance. In addition it has a similar autoregressive coefficient but smaller residual variance than the $GDP$ measure $GDP^+$ of Aruoba et al. (2016). Historical decompositions of $GDE$ and $GDI$ measurement errors reveal a larger news share in $GDE$ than in $GDI$. 

\clearpage
\appendix 
\setcounter{section}{1}
\renewcommand{\thesection}{\Alph{section}}
\setcounter{equation}{0}
\renewcommand{\theequation}{\Alph{section}.\arabic{equation}}

\section*{Appendix} 
This appendix first analyzes the identification of the univariate state space system in Jacobs and van Norden (2011)\nocite{JacobsvanNorden2011}, using the procedure described in Komunjer and Ng (2011)\nocite{KomunjerNg2011} and used by Aruoba et al. (2016)\nocite{Aruobaetal2016}. Thereafter we discuss the possibilities for identification in more general reconciliation models by comparing the number of free parameters to the number of available moment conditions. 

\section*{Identification in the univariate, two vintage JvN framework}
The state space form of the Jacobs and van Norden (JvN) model with two vintages and no spillovers can be expressed as
\begin{align}
\begin{bmatrix}
y_{t}^{1} \\
y_{t}^{2}%
\end{bmatrix}&=%
\begin{bmatrix}
1 & 1 & 0 & 1 & 0 \\ 1 & 0 & 1 & 0  & 1
\end{bmatrix}%
\begin{bmatrix}
\tilde{y}_t \\ \nu_t^{1} \\ \nu_t^{2}\\ \zeta_t^{1} \\ \zeta_t^{2}
\end{bmatrix} \label{eq:measure0},\\
\begin{bmatrix}
\tilde{y}_t \\ \nu_t^{1} \\ \nu_t^{2}\\ \zeta_t^{1} \\ \zeta_t^{2}
\end{bmatrix}&=%
\begin{bmatrix}
\rho & 0 & 0 & 0 & 0 \\ 0 & 0 & 0 & 0 & 0 \\ 0 & 0 & 0 & 0 & 0 \\ 0 & 0 & 0 & 0 & 0 \\ 0 & 0 & 0 & 0 & 0
\end{bmatrix}%
\begin{bmatrix}
\tilde{y}_{t-1} \\ \nu_{t-1}^{1} \\ \nu_{t-1}^{2}\\ \zeta_{t-1}^{1} \\ \zeta_{t-1}^{2}
\end{bmatrix} +%
\begin{bmatrix}
1 & 1 & 1 & 0 & 0 \\ 0 & -1 & -1 & 0 & 0 \\ 0 & 0 & -1 & 0 & 0 \\ 0 & 0 & 0 & 1 & 0 \\ 0 & 0 & 0 & 0 & 1
\end{bmatrix}%
\begin{bmatrix}
\eta_{t,\tilde{y}} \\ \eta_{t,\nu}^{1} \\ \eta_{t,\nu}^{2} \\ \eta_{t,\zeta}^{1} \\ \eta_{t,\zeta}^{2}
\end{bmatrix}, \label{eq:state0}
\end{align}
where $y_t^{i}$ for $i=1,2$ denotes the different releases, $\tilde{y}_t$ is the ``true'' value of the variable of interest, $\nu_t^{i}$ and $\zeta_t^{i}$ for $i=1,2$ are the news and the noise components and $\eta_{t,\nu}^{i}$ and $\eta_{t,\zeta}^{i}$ for $i=1,2$ are the news and the noise shocks, $[\eta_{t,\tilde{y}} \ \eta_{t,\nu}^{1} \ \eta_{t,\nu}^{2} \ \eta_{t,\zeta}^{1} \ \eta_{t,\zeta}^{2}]' \sim N(\bm{0},\mH)$ with $\mH = \diag(\sigma_{\tilde{y}}^2, \sigma_{\nu1}^2, \sigma_{\nu2}^2, \sigma_{\zeta1}^2, \sigma_{\zeta2}^2)$, where $\diag$ denotes a diagonal matrix.

\newpage
The system in (\ref{eq:measure0}) and (\ref{eq:state0}) can also be written as

\begin{align}
\begin{bmatrix}
y_{t}^{1} \\
y_{t}^{2}%
\end{bmatrix}&=%
\begin{bmatrix}
1 \\ 1
\end{bmatrix}%
\tilde{y}_t +%
\begin{bmatrix}
0 & 1 & 0 & 1 & 0 \\ 0 & 0 & 1 & 0  & 1
\end{bmatrix}%
\begin{bmatrix}
\omega_{t,\tilde{y}} \\ \omega_{t,\nu}^{1} \\ \omega_{t,\nu}^{2} \\ \omega_{t,\zeta}^{1} \\ \omega_{t,\zeta}^{2}
\end{bmatrix},\label{eq:measure1}\\
\tilde{y}_t&=%
\rho \tilde{y}_{t-1} +
\begin{bmatrix}
1 & 0 & 0 & 0 & 0
\end{bmatrix}%
\begin{bmatrix}
\omega_{t,\tilde{y}} \\ \omega_{t,\nu}^{1} \\ \omega_{t,\nu}^{2} \\ \omega_{t,\zeta}^{1} \\ \omega_{t,\zeta}^{2}
\end{bmatrix},\label{eq:state1}
\end{align}%
where $\omega_{t,\tilde{y}} = \eta_{t,\tilde{y}} + \eta_{t,\nu}^{1} + \eta_{t,\nu}^{2}$, $\omega_{t,\nu}^{1} = - \eta_{t,\nu}^{1} - \eta_{t,\nu}^{2}$, $\omega_{t,\nu}^{2} = -\eta_{t,\nu}^{2}$, $\omega_{t,\zeta}^{1} =\eta_{t,\zeta}^{1}$, $\omega_{t,\zeta}^{2} =\eta_{t,\zeta}^{2}$ and
$[\omega_{t,\tilde{y}} \ \omega_{t,\nu}^{1} \ \omega_{t,\nu}^{2} \ \omega_{t,\zeta}^{1} \ \omega_{t,\zeta}^{2}]' \sim N(\bm{0},\mSigma)$ with variance-covariance matrix $\mSigma$ defined as
\begin{equation}\label{def:Sigma}
\mSigma =  \begin{bmatrix}
\Sigma_{\tilde{y}\tilde{y}} & \Sigma_{\tilde{y}\nu1} & \Sigma_{\tilde{y}\nu2} & 0 & 0 \\
\Sigma_{\nu1\tilde{y}} & \Sigma_{\nu1\nu1} & \Sigma_{\nu1\nu2} & 0 & 0 \\
\Sigma_{\nu2\tilde{y}} & \Sigma_{\nu2\nu1} & \Sigma_{\nu2\nu2} & 0 & 0 \\
0 & 0 & 0 & \Sigma_{\zeta1\zeta1} & 0 \\
0 & 0 & 0 & 0 & \Sigma_{\zeta2\zeta2} \\
\end{bmatrix},
\end{equation}

where
\begin{align}
\Sigma_{\tilde{y}\tilde{y}}&= \sigma_{\tilde{y}}^2+\sigma_{\nu1}^2+\sigma_{\nu2}^2, \ \
\Sigma_{\tilde{y}\nu1}= -\sigma_{\nu1}^2-\sigma_{\nu2}^2,\nonumber\\
\Sigma_{\tilde{y}\nu2}&= -\sigma_{\nu2}^2, \ \
\Sigma_{\nu1\nu1}= \sigma_{\nu1}^2+\sigma_{\nu2}^2,\ \
\Sigma_{\nu1\nu2} = \sigma_{\nu2}^2,\label{eq:Sigdef}\\
\Sigma_{\nu2\nu2}&= \sigma_{\nu2}^2,\ \
\Sigma_{\zeta1\zeta1}= \sigma_{\zeta1}^2,\ \
\Sigma_{\zeta2\zeta2}= \sigma_{\zeta2}^2,\nonumber
\end{align}
which implies
\begin{align}
\Sigma_{\tilde{y}\nu1}= -\Sigma_{\nu1\nu1},\nonumber\\
\Sigma_{\tilde{y}\nu2}= -\Sigma_{\nu2\nu2},\label{eq:Sigequal}\\
\Sigma_{\nu2\nu2}= \Sigma_{\nu1\nu2}.\nonumber
\end{align}
Moreover, consider the following restriction
\begin{equation}\label{eq:rest1}
\sigma_{\nu2}^2 = 0,
\end{equation}
which is justified due to the fact that the second release news shock corresponds to information outside the sample and is thus not needed.

Aruoba et al. (2016)\nocite{Aruobaetal2016} show that a state space system described in Equations (\ref{eq:measure1}) and (\ref{eq:state1}) is not identified with $\mSigma$ unrestricted and identified under certain restrictions on elements of $\mSigma$. We now investigate whether the restrictions implied by JvN lead to an identified system following the procedure described in Aruoba et al. (2016)\nocite{Aruobaetal2016}.

\begin{theorem}\label{thm: main}
Suppose that Assumptions 1, 2, 4-NS and 5-NS of Komunjer and Ng (2011)\nocite{KomunjerNg2011} hold. Then according to Proposition 1-NS of Komunjer and Ng (2011)\nocite{KomunjerNg2011}, the state space model described in (\ref{eq:measure0}) and (\ref{eq:state0}) is identified given the restrictions implied by (\ref{eq:measure0}), (\ref{eq:state0}) and (\ref{eq:rest1}).
\end{theorem}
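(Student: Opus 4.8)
The plan is to apply Proposition 1-NS of Komunjer and Ng (2011), which reduces local identification of a linear Gaussian state-space model to a full-column-rank condition on an explicit Jacobian-type matrix. The first step is therefore bookkeeping: recast the system \eqref{eq:measure0}--\eqref{eq:state0}, or equivalently its minimal form \eqref{eq:measure1}--\eqref{eq:state1}, into the canonical notation of Komunjer and Ng, reading off the system matrices $A,B,C,D$ and the innovation covariance $\mSigma$ as explicit functions of the free parameter vector
\[
\theta=\bigl(\rho,\sigma_{\tilde{y}}^2,\sigma_{\nu1}^2,\sigma_{\zeta1}^2,\sigma_{\zeta2}^2\bigr),
\]
where restriction \eqref{eq:rest1} has already eliminated $\sigma_{\nu2}^2$. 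Because only $\tilde{y}_t$ carries dynamics, the minimal state is scalar, so $A=\rho$ and $C=[1,1]'$, while the entire news/noise block enters through $D$ and through the patterned covariance $\mSigma$ of \eqref{def:Sigma}, whose entries obey the linear restrictions \eqref{eq:Sigequal}.

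I would next verify the hypotheses of the cited proposition. Assumptions 1 and 2 are regularity and minimality conditions: the parameters range over an open set, the matrices $A,B,C,D,\mSigma$ are (polynomial, hence smooth) in $\theta$, and the realization is controllable and observable at the true value $\theta_0$. The latter is immediate here, since $A=\rho\neq0$ generically, $C\neq\bm{0}$ guarantees observability, and a nonzero state-loading guarantees controllability of the scalar state. The non-singularity Assumptions 4-NS and 5-NS place us in the regime where the spectral density of the observable two-vector $[y_t^1,y_t^2]'$ has full rank $2$ at $\theta_0$; this holds because the presence of the noise components makes the one-step prediction-error covariance non-singular.

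The core of the argument is the rank calculation demanded by Proposition 1-NS. One assembles the Komunjer--Ng matrix whose columns are the partial derivatives of $\vec(A)$, $\vec(B)$, $\vec(C)$, $\vec(D)$ and $\vech(\mSigma)$ with respect to $\theta$, stacked against the columns generated by the transformations that leave the observable spectral density unchanged: the (one-dimensional) similarity transformation of the scalar state, together with the shock-basis changes admitted in Komunjer and Ng's non-singular parametrization. I would then show that this augmented matrix attains full column rank at $\theta_0$, i.e.\ that no nonzero combination of a parameter perturbation and an infinitesimal equivalence transformation leaves every system matrix fixed. This is exactly where restriction \eqref{eq:rest1} is decisive: setting $\sigma_{\nu2}^2=0$ eliminates the direction along which the unrestricted covariance of Aruoba et al.\ (2016) could be reparametrized without moving any observable moment, thereby restoring full rank.

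The main obstacle is precisely this rank verification. The entries of $\mSigma$ are nonlinear in the underlying variances and are moreover tied together by the news restrictions \eqref{eq:Sigequal}, so the Jacobian must be formed with those constraints already substituted in; at the same time one must enumerate the generators of the equivalence group correctly, because in the singular-shock but non-singular-innovation regime an over- or under-count of the admissible shock-rotations changes the target rank and can spuriously suggest identification or its failure. Concretely I would evaluate the augmented matrix symbolically at a generic $\theta_0$ and confirm that its column rank equals $\dim(\theta)$ plus the dimension of the equivalence group; granting that, Proposition 1-NS of Komunjer and Ng (2011) yields local identification of \eqref{eq:measure0}--\eqref{eq:state0} under \eqref{eq:rest1}, as claimed.
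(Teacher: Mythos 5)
Your overall framing is right---the paper also casts \eqref{eq:measure1}--\eqref{eq:state1} in the Komunjer--Ng $(A,B,C,D)$ form and invokes Proposition 1-NS---but there is a genuine gap: the step that actually establishes identification is never carried out. You reduce the problem to ``show that the augmented Jacobian has full column rank at a generic $\theta_0$'' and then state that you \emph{would} verify this symbolically; that rank verification \emph{is} the theorem, and deferring it leaves the proof incomplete. The paper takes the other, global route offered by Proposition 1-NS: $\theta_0$ and $\theta_1$ are observationally equivalent iff there is a scalar $\tau\neq 0$ satisfying \eqref{eq:Atheta}--\eqref{eq:Sigtheta1}; since $A(\theta)=\rho$ is scalar this forces $\rho_1=\rho_0$ and $\tau=1$, and the remaining conditions reduce to the explicit moment equations \eqref{eq:SigBB1}--\eqref{eq:SigDD1}, which are then solved. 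Parametrizing a putative second solution by $\Sigma_{BB,1}=\Sigma_{BB,0}+\delta(1-\rho^2)$ yields $\sigma^2_{\nu2,1}=\sigma^2_{\nu2,0}+\delta$, so the restriction \eqref{eq:rest1} pins down $\delta=0$ and hence $\theta_1=\theta_0$. Your intuition that \eqref{eq:rest1} ``eliminates the direction along which the unrestricted covariance could be reparametrized'' is exactly the right one, but in your write-up it remains an assertion rather than a demonstrated fact.

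Two further problems. First, your claim that minimality (Assumption 5-NS) is ``immediate'' is too quick: in the non-singular case the controllability condition bears on the \emph{innovations} representation, i.e.\ on the Kalman gain $\bm{K(\theta)}$, not on $\bm{B(\theta)}$. The paper must prove $\bm{K(\theta)}\neq 0$ by showing that the steady-state variance $p$ solving the Riccati equation is strictly positive, which takes a Schur-complement argument for $\Sigma_{BB}-\bm{\Sigma_{BD}}\bm{\Sigma_{DD}}^{-1}\bm{\Sigma_{DB}}>0$ together with Lemma \ref{lem1} and an analysis of the resulting quadratic in $p$; none of this is ``immediate.'' Second, with the timing convention $\vz_{t+1}=\bm{C(\theta)}x_t+\bm{D(\theta)}\bm{\epsilon}_{t+1}$ one has $\bm{C(\theta)}=[\rho\ \rho]'$, not $[1\ 1]'$, and the deduction $\tau=1$ from \eqref{eq:Ctheta} depends on the form of $\bm{C}$. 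Finally, even if completed, the rank-condition route you sketch delivers only \emph{local} identification, whereas the equation-solving argument in the paper gives the stronger global statement.
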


\begin{proof}[Proof of Theorem \ref{thm: main}]
We begin by rewriting the state space model in (\ref{eq:measure1}) and (\ref{eq:state1}) to match the notation used in Komunjer and Ng (2011)\nocite{KomunjerNg2011}
\begin{align}
x_{t+1} & = A(\theta) x_t + \bm{B(\theta)} \bm{\epsilon}_{t+1} \label{eq:stateKN}\\
\vz_{t+1} &= \bm{C(\theta)} x_t + \bm{D(\theta)} \bm{\epsilon}_{t+1}\label{eq:measureKN},
\end{align}
where $x_{t}= \tilde{y}_t$, $\vz_t = [y_{t}^{1} \ y_{t}^{2}]'$ , $\bm{\epsilon}_{t} = [\omega_{t,\tilde{y}} \ \omega_{t,\nu}^{1} \ \omega_{t,\nu}^{2} \ \omega_{t,\zeta}^{1} \ \omega_{t,\zeta}^{2}]' $, $A(\theta)=\rho$, $\bm{B(\theta)} = [1 \ 0 \ 0 \ 0 \ 0]$, $\bm{C(\theta)}= [\rho \ \rho]'$,
$$ \bm{D(\theta)} = \begin{bmatrix}
1 & 1 & 0 & 1 & 0 \\
1 & 0 & 1 & 0 & 1
\end{bmatrix}$$
and $\vtheta = [\rho \ \sigma_{\tilde{y}}^2 \ \sigma_{\nu1}^2 \ \sigma_{\nu2}^2 \ \sigma_{\zeta1}^2 \ \sigma_{\zeta2}^2]'$.

Given that $\mSigma$ is positive definite and $0 \leq \rho<1$, Assumption 1 and 2 of Komunjer and Ng (2011)\nocite{KomunjerNg2011} are satisfied. Given that $\bm{D(\theta)} \mSigma \bm{D(\theta)}'$ is nonsingular also Assumption 4-NS of Komunjer and Ng (2011)\nocite{KomunjerNg2011} is satisfied. Rewriting the state space model in (\ref{eq:stateKN}) and (\ref{eq:measureKN}) into its innovation representation gives
\begin{align}
\hat{x}_{t+1|t+1} & = A(\theta) \hat{x}_{t|t} + \bm{K(\theta)} \va_{t+1}\\
\vz_{t+1} &= \bm{C(\theta)} \hat{x}_{t|t}  + \va_{t+1},
\end{align}
where $\bm{K(\theta)}$ is the Kalman gain and $\va_{t+1}$ is the one-step ahead forecast error of $z_{t+1}$ with variance $\bm{\Sigma_a(\theta)}$. The Kalman gain and the variance of the one-step ahead forecast error for this system can be expressed as
\begin{align}
\bm{K(\theta)}& = (p \rho \mC' + \bm{\Sigma_{BD}})(p \mC \mC' + \bm{\Sigma_{DD}})^{-1} \label{eq:Ktheta}\\
\bm{\Sigma_a(\theta)}& = p \mC \mC' + \bm{\Sigma_{DD}},  \label{eq:Sigtheta}
\end{align}
where $p$ is the variance of the state vector, solving the following Riccati equation
\begin{equation}\label{eq: riccati}
p = p \rho^2 + \Sigma_{BB} - (p \rho \mC' + \bm{\Sigma_{BD}})(p \mC \mC' + \bm{\Sigma_{DD}})^{-1}(p \rho \mC + \bm{\Sigma_{DB}}).
\end{equation}
and $\Sigma_{BB}= \mB \mSigma \mB'$, $\bm{\Sigma_{BD}}= \mB \mSigma \mD'$, $\bm{\Sigma_{DD}}= \mD \mSigma \mD'$ with

\begin{align}
\Sigma_{BB} & = \Sigma_{\tilde{y}\tilde{y}},\nonumber\\
\bm{\Sigma_{BD}} & = \begin{bmatrix}
\Sigma_{\tilde{y}\tilde{y}} + \Sigma_{\tilde{y}\nu1} & \Sigma_{\tilde{y}\tilde{y}} + \Sigma_{\tilde{y}\nu2}
\end{bmatrix},\label{eq:SigBD} \\
\bm{\Sigma_{DD}} & = \begin{bmatrix}
\Sigma_{\tilde{y}\tilde{y}} + 2\Sigma_{\tilde{y}\nu1} + \Sigma_{\nu1\nu1} + \Sigma_{\zeta1\zeta1} & . \nonumber \\
\Sigma_{\tilde{y}\tilde{y}} + \Sigma_{\tilde{y}\nu1} + \Sigma_{\nu2\tilde{y}} + \Sigma_{\nu2\nu1} & \Sigma_{\tilde{y}\tilde{y}} + 2\Sigma_{\tilde{y}\nu2} + \Sigma_{\nu2\nu2} + \Sigma_{\zeta2\zeta2}
\end{bmatrix}.
\end{align}
By using the definitions in (\ref{eq:Sigdef}), the expressions in (\ref{eq:SigBD}) can also be written as
\begin{align}
\Sigma_{BB} & = \sigma_{\tilde{y}}^2 + \sigma_{\nu1}^2 + \sigma_{\nu2}^2 , \ \ \ \ \
\bm{\Sigma_{BD}}= \begin{bmatrix}
\sigma_{\tilde{y}}^2 & \sigma_{\tilde{y}}^2 + \sigma_{\nu1}^2
\end{bmatrix},\nonumber \\
\bm{\Sigma_{DD}} & = \begin{bmatrix}
\sigma_{\tilde{y}}^2 + \sigma_{\zeta1}^2 & . \\
\sigma_{\tilde{y}}^2  & \sigma_{\tilde{y}}^2 + \sigma_{\nu1}^2 + \sigma_{\zeta2}^2
\end{bmatrix}. \label{eq:Sigsig}
\end{align}

Assumption 5-NS of Komunjer and Ng (2011)\nocite{KomunjerNg2011} relates to the \emph{controllability} and \emph{observability} of state space systems. The state space system in (\ref{eq:measure1}) and (\ref{eq:state1}) is \emph{controllable} if matrix $[ \bm{K(\theta)} \  A(\theta) \bm{K(\theta)}]$ has full row rank and it is \emph{observable} if the matrix $[\bm{C(\theta)}' \ A(\theta)' \bm{C(\theta)}']$ has full column rank and is thus said to be minimal.

To show that Assumption 5-NS  is satisfied, first note that $\Sigma_{BB}-\bm{\Sigma_{BD}} \bm{\Sigma_{DD}}^{-1}\bm{\Sigma_{DB}}$ is the Schur complement of $\mOmega$, the variance covariance matrix of the joint distribution of $x_{t+1}$ and $z_{t+1}$, with respect to $\bm{\Sigma_{DD}}$ where
$$\mOmega = \begin{bmatrix}
\Sigma_{BB}      & \bm{\Sigma_{BD}} \\
\bm{\Sigma_{DB}} & \bm{\Sigma_{DD}}
\end{bmatrix}.
$$ Because $\mOmega$ is a positive definite matrix, its Schur complement is also positive definite thus leading to $\Sigma_{BB}-\bm{\Sigma_{BD}} \bm{\Sigma_{DD}}^{-1} \bm{\Sigma_{DB}}>0$. Now to show that this inequality leads to $p>0$, we use the following lemma 

\begin{lemma}\label{lem1}
Assume $\mA$ and $(\mA+\mB)$ are invertible and that $rank(\mB)=1$, then $$ (\mA+\mB)^{-1} = \mA^{-1} - \frac{1}{1+tr(\mB \mA^{-1})}\mA^{-1}\mB\mA^{-1}.$$
\end{lemma}

We can now use Lemma \ref{lem1} to rewrite Equation (\ref{eq: riccati}) as
\begin{align}\label{eq: riccati2}
p & = p \rho^2 + \Sigma_{BB} - (p \rho \mC' + \bm{\Sigma_{BD}} ) \bm{\Sigma_{DD}}^{-1} (p \rho \mC + \bm{\Sigma_{DB}}) \nonumber \\
  & + \frac{p}{g}(p \rho \mC' + \bm{\Sigma_{BD}})\bm{\Sigma_{DD}}^{-1} \mC \mC'\bm{\Sigma_{DD}}^{-1}(p \rho \mC + \bm{\Sigma_{DB}} ),
\end{align}
where $g=1+p tr(\mC \mC' \bm{\Sigma_{DD}}^{-1})$. After some manipulations we find the following quadratic equation
\begin{equation}
a p^2 +  b p + c = 0,
\end{equation}
with
\begin{align}
a & =-tr(\mC \mC' \bm{\Sigma_{DD}}^{-1}),\nonumber \\
b & =(\rho-\bm{\Sigma_{BD}} \bm{\Sigma_{DD}}^{-1} \bm{\Sigma_{DB}})^2 + tr(\mC \mC' \bm{\Sigma_{DD}}^{-1}) (\Sigma_{BB} - \bm{\Sigma_{BD}} \bm{\Sigma_{DD}}^{-1} \bm{\Sigma_{DB}})-1,\nonumber \\
c & = \Sigma_{BB} - \bm{\Sigma_{BD}} \bm{\Sigma_{DD}}^{-1} \bm{\Sigma_{DB}}. \nonumber
\end{align}
The necessary and sufficient conditions for $p>0$ are $\sqrt{b^2 - 4ac} > 0$ and $\frac{-b-\sqrt{b^2-4ac}}{2a}>0$. The first condition leads to $b^2 + 4 tr(\mC' \bm{\Sigma}_{DD}^{-1} \mC)(\Sigma_{BB}- \bm{\Sigma_{BD}} \bm{\Sigma_{DD}}^{-1} \bm{\Sigma_{DB}})>0$ and the second to $tr(\mC' \bm{\Sigma_{DD}}^{-1} \mC)(\Sigma_{BB} - \bm{\Sigma_{BD}} \bm{\Sigma_{DD}}^{-1} \bm{\Sigma_{DB}})>0$ Since $\bm{\Sigma_{DD}}$ is positive definite (thus $tr(\mC' \bm{\Sigma_{DD}}^{-1} \mC)>0$) both conditions are satisfied if $\Sigma_{BB} - \bm{\Sigma_{BD}} \bm{\Sigma_{DD}}^{-1} \bm{\Sigma_{DB}}>0$.

Given also that $A(\theta) = \rho \geq 0$ and $\bm{C(\theta)}\geq \bm{0}$, we obtain $\bm{K(\theta)} \neq 0$ and thus Assumption 5-NS is satisfied.

Now Proposition 1-NS of Komunjer and Ng (2011) can be applied, which implies that two vectors $$\vtheta_0=[\rho \ \sigma_{\tilde{y},0}^2 \ \sigma_{\nu1,0}^2 \ \sigma_{\nu2,0}^2 \ \sigma_{\zeta1,0}^2 \ \sigma_{\zeta2,0}^2]'$$ and $$\vtheta_1=[\rho \ \sigma_{\tilde{y},1}^2 \ \sigma_{\nu1,1}^2 \ \sigma_{\nu2,1}^2 \ \sigma_{\zeta1,1}^2 \ \sigma_{\zeta2,1}^2]'$$ are observationally equivalent iff there exists a scalar $\tau \neq 0$ such that
\begin{align}
A(\theta_1) & = \tau A(\theta_0) \tau^{-1} \label{eq:Atheta}\\
\bm{K(\theta_1)} & = \tau \bm{K(\theta_0)} \label{eq:Ktheta1} \\
\bm{C(\theta_1)} & = \bm{C(\theta_0)} \tau^{-1} \label{eq:Ctheta} \\
\bm{\Sigma_a(\theta_1)} & = \bm{\Sigma_a(\theta_0)}. \label{eq:Sigtheta1}
\end{align}
Given that $A(\theta)=\rho$, it follows from Equation (\ref{eq:Atheta}) that $\rho_0 = \rho_1$ and thus we can deduce from (\ref{eq:Ctheta}) that $\gamma = 1$. Hence, by using Equations (\ref{eq:Ktheta}) and (\ref{eq:Sigtheta}), the conditions (\ref{eq:Ktheta1}) and (\ref{eq:Sigtheta1}) can be expressed as
\begin{align}
\mK_1 & = \mK_0  = (p_0 \rho \mC' + \bm{\Sigma_{BD0}})(p \mC \mC' + \bm{\Sigma_{DD0}})^{-1} \label{eq:Ktheta2} \\
\bm{\Sigma_{a1}} & = \bm{\Sigma_{a0}} =p_0 \mC \mC' + \bm{\Sigma_{DD0}}, \label{eq:Sigtheta2}
\end{align}
where $p_0$ solves the following Riccati equation
\begin{equation}
p_0 = p_0 \rho^2 + \Sigma_{BB0} - \bm{K_0} (p_0 \rho \mC + \bm{\Sigma_{DB0}}). \label{eq:Ric}
\end{equation}
Equations (\ref{eq:Ktheta2}) to (\ref{eq:Ric}) are satisfied if and only if
\begin{align}
p_1 (1-\rho^2) - \Sigma_{BB1} & = p_0 (1-\rho^2) - \Sigma_{BB0}\label{eq:SigBB1}\\
p_1 \rho \mC' + \bm{\Sigma_{BD1}} & = p_0 \rho \mC' + \bm{\Sigma_{BD0}}\label{eq:SigBD1}\\
p_1 \mC \mC' + \bm{\Sigma_{DD1}} & = p_0 \mC \mC' + \bm{\Sigma_{DD0}}. \label{eq:SigDD1}
\end{align}
Without loss of generality let
\begin{equation}\label{eq:sigdelta1}
\Sigma_{\tilde{y}\tilde{y},1} =\Sigma_{\tilde{y}\tilde{y},0} + \delta(1-\rho^2)
\end{equation}
leading to
\begin{equation}\label{eq:sigdelta2}
\Sigma_{BB,1} =\Sigma_{BB,0}+ \delta(1-\rho^2).
\end{equation}

We now proceed by splitting the analysis into two cases.

{\bfseries Case 1:} $\delta = 0$. From (\ref{eq:SigBB1}) we obtain $p_1 = p_0$. (\ref{eq:SigBD1}) hence implies $\sigma_{\tilde{y,1}}^2=\sigma_{\tilde{y,0}}^2$ and $\sigma_{\nu1,1}^2 =\sigma_{\nu1,0}^2$ and given that $\Sigma_{\tilde{y}\tilde{y},1}=\Sigma_{\tilde{y}\tilde{y},0}$ it follows $\sigma_{\nu2,1}^2 =\sigma_{\nu2,0}^2$. (\ref{eq:SigDD1}) implies that  $\bm{\Sigma_{DD1}} = \bm{\Sigma_{DD0}}$ and thus $\sigma_{\zeta1,1}^2 =\sigma_{\zeta1,0}^2$ and $\sigma_{\zeta2,1}^2 =\sigma_{\zeta2,0}^2$, leading to the fact that $\theta_1 = \theta_0$.

{\bfseries Case 2:} $\delta \neq 0$. From (\ref{eq:SigBB1}) we obtain $p_1 = p_0 + \delta$. From (\ref{eq:SigBD1}) it follows
\begin{align}
\sigma_{\tilde{y},1}^2 &= \sigma_{\tilde{y},0}^2 -\delta \rho^2 \ \ \text{and} \ \
\sigma_{\nu1,1}^2 = \sigma_{\nu1,0}^2. \label{eq:solve1}
\end{align}
Moreover, (\ref{eq:SigBB1}) gives
\begin{equation}\label{eq:solve2}
\sigma_{\nu2,1}^2 = \sigma_{\nu2,0}^2 + \delta.
\end{equation}
Finally, the equations in (\ref{eq:SigDD1}) lead to
\begin{align}
\sigma_{\zeta1,1}^2&=\sigma_{\zeta1,0}^2 \ \ \text{and} \ \
\sigma_{\zeta2,1}^2 = \sigma_{\zeta2,0}^2. \label{eq:solve3}
\end{align}
Note that (\ref{eq:Sigdef}) and (\ref{eq:solve1}) to (\ref{eq:solve3}) result into
 \begin{equation}\label{def:Sigmasolve}
\mSigma_1 =  \begin{bmatrix}
\Sigma_{\tilde{y}\tilde{y},0} + \delta(1-\rho^2) & \Sigma_{\tilde{y}\nu1,0} - \delta & \Sigma_{\tilde{y}\nu2,0} - \delta & 0 & 0 \\
\Sigma_{\nu1\tilde{y},0} - \delta & \Sigma_{\nu1\nu1,0} + \delta & \Sigma_{\nu1\nu2,0} + \delta & 0 & 0 \\
\Sigma_{\nu2\tilde{y},0} -\delta & \Sigma_{\nu2\nu1,0} + \delta & \Sigma_{\nu2\nu2,0} + \delta & 0 & 0 \\
0 & 0 & 0 & \Sigma_{\zeta1\zeta1} & 0 \\
0 & 0 & 0 & 0 & \Sigma_{\zeta2\zeta2} \\
\end{bmatrix}.
\end{equation}
Finally, from (\ref{eq:solve2}) and (\ref{eq:solve3}) it follows that $\delta=0$.
\end{proof}

\section*{Identification in generalized reconciliation models}

We now return to the use of moment conditions to discuss possible approaches to identification in data reconciliation models with multiple data releases. We only consider the reconciliation of exactly two data series, but otherwise consider linear dynamic data generating processes more general than any that we have seen used in the literature. 

Specifically, we consider models of the form 
\begin{eqnarray}
    \begin{bmatrix}
        \mY_{t}^{0} \\
        \mY_{t}^{1}%
    \end{bmatrix}
    = \mZ \cdot \valpha_{t}
\end{eqnarray}%
and 
\begin{equation}
    \valpha_{t}  = \mT \cdot \valpha_{t-1} + \mR \cdot \veta_{t}
\end{equation}
where
\begin{itemize}
\item $\mY_{t}^{i}$ is a $1\times l$ vector containing $l$
releases of series $i=\{0,1\}~$estimates of the unobserved true value $%
\widetilde{y}_{t}.$

\item $\valpha _{t}$ is a $\left( p+4\cdot l\right) \times 1$ latent state
vector that we may partition 
as $\valpha _{t}^{\prime }=%
\begin{bmatrix}
\widetilde{\mY}_{t-1}^{p\prime } & \vnu_{t}^{0\prime } &
\vnu _{t}^{1\prime } & \vzeta_{t}^{0\prime }
& \vzeta_{t}^{1\prime }%
\end{bmatrix}%
$

\item $\widetilde{\mY}_{t-1}^{p}$ is a $p\times 1$ vector $\equiv \left[
\widetilde{y}_{t},\ldots ,\widetilde{y}_{t-p+1}\right] ^{\prime }$

\item $\vnu_{t}^{i}$ for $i=\{0,1\}$ is a $l\times 1$
vector of news shocks contained in $\mY_{t}^{i}$

\item $\vzeta_{t}^{i}$ for $i=\{0,1\}$ is a $l\times 1$
vector of noise shocks contained in $\mY_{t}^{i}$

\item $\veta _{t}\sim N\left( \bm{0},\mSigma \right) $ is a $\left( 1+4\cdot
l\right) \times 1$ vector of i.i.d. mean zero normally distributed
shocks.with diagonal covariance matrix $\mSigma $.

\item $\mZ$ is a $\left( 2\cdot l\right) \times \left( p+4\cdot l\right) $
matrix of the form%
\[
Z\equiv
\begin{bmatrix}
\bm{1}_{\left( 2\cdot l\right) \times 1} & \bm{0}_{\left( 2\cdot l\right) \times
\left( p-1\right) } & \mI_{2\cdot l} & \mI_{2\cdot l}%
\end{bmatrix}%
\]

\item $\bm{1}_{a\times b}$ is a matrix of dimension $a\times b$ composed entirely
of $1$'s

\item $\bm{0}_{a\times b}$ is a matrix of dimension $a\times b$ composed entirely
of $0$'s

\item $\mI_{a}$ is a $a\times a$ identity matrix

\item $\mT$ is a $\left( p+4\cdot l\right) \times \left( p+4\cdot l\right) $
block diagonal matrix of the form
\begin{equation}
\mT\equiv
\begin{bmatrix}
\mT_{p} & \bm{0}_{p\times \left( 4\cdot l\right) } \\
\bm{0}_{\left( 4\cdot l\right) \times p} & \mT_{S}%
\end{bmatrix}%
\end{equation}

\item $\mT_{p}$ is a $p\times p$ matrix
\begin{equation}
\mT_{p}=%
\begin{bmatrix}
\rho _{1},\rho _{2},\ldots  & \rho _{p} \\
\mI_{p-1} & \bm{0}_{\left( p-1\right) \times 1}%
\end{bmatrix}%
\end{equation}

\item $\mT_{S}$ is a $\left( 4\cdot l\right) \times \left( 4\cdot l\right) $
arbitrary diagonal matrix

\item $\mR$ is a $\left( p+4\cdot l\right) \times \left( 1+4\cdot l\right) $
matrix of the form%
\begin{equation}
\mR=%
\begin{bmatrix}
1 & \bm{1}_{1\times l} & \bm{1}_{1\times l} & \bm{0}_{1\times l} & \bm{0}_{1\times l} \\
\bm{0}_{\left( p-1\right) \times 1} & \bm{0}_{\left( p-1\right) \times l} & \bm{0}_{\left(
p-1\right) \times l} & \bm{0}_{\left( p-1\right) \times l} & \bm{0}_{\left( p-1\right)
\times l} \\
\bm{0}_{l\times 1} & -\mU_{l} & \mPsi  & \bm{0}_{l\times l} & \bm{0}_{l\times l} \\
\bm{0}_{l\times 1} & \bm{0}_{l\times l} & -\mU_{l} & \bm{0}_{l\times l} & \bm{0}_{l\times l} \\
\bm{0}_{l\times 1} & \bm{0}_{l\times l} & \bm{0}_{l\times l} & \mI_{l\times l} & \mPhi  \\
\bm{0}_{l\times 1} & \bm{0}_{l\times l} & \bm{0}_{l\times l} & \bm{0}_{l\times l} & \mI_{l\times l}%
\end{bmatrix}%
\end{equation}

\item $\mU_{l}$ is a $l\times l$ matrix with $0$'s below the main diagonal and
$1$'s everywhere else

\item $\mPsi ,\mPhi $ are unrestricted $l\times l$ matrices.
\end{itemize}

The model estimated in the paper is the special case of the above where
\begin{enumerate}
    \item $p = 1$
    \item $\mT_S = \bm{0}_{(4 \cdot l) \times (4 \cdot l)}$
    \item $\mPhi = \mPsi = \bm0_{l \times l} $
\end{enumerate}
Relaxing the first condition allows us to consider model where the dynamics of the unobserved true values follow an $AR(p)$ process rather than simply an $AR(1)$. Allowing for higher-order autocorrelations adds an additional $p-1$ free parameters to the model, but also adds an additional $2 \cdot (p-1)$ sample autocorrelations that may be used for identification.

Relaxing the second condition allows what JvN refer to as ``spillover'' effects. This permits revisions to the values for calendar period $t$ to be correlated with revisions to calendar period $t-1$. This may occur, for example, when revisions tend to shift measured growth from one quarter to an adjascent quarter, or when the incorporation of lower frequency data sources (e.g. annual tax returns) shift multiple periods in the same direction. This adds an additional $4 \cdot l$ free parameters to the model. However, it also brings into play an additional $2 \cdot (l - 1)$ moments capturing the 1st-order autocorrelations of the revisions of our two series.

Relaxing the third condition allows for the possibility that measurement errors of either type may be correlated across the two series. Contemporaneous correlations (i.e. measurement errors that affect the same \textit{release} of each series) are captured by the diagonals of these two matrices. Evidence that information tends to be incorporated into releases of $y^0$ before (after) those of $y^1$ implies that there should be non-zero entries of $\mPsi$ above (below) the main diagonal. Contemporaneous correlations would add an additional $2 \cdot l$ free parameters to the model, while unrestricted correlations would add an additional $2 \cdot l^2$ free parameters. However, we have already assumed the use of all $l \cdot (2 \cdot l + 1)$ contemporaneous cross-moments of the various vintages of both series, so there is no offsetting gain in the number of moments available for identification.

\clearpage

\printbibliography 
\end{document}